\definecolor{defblue}{rgb}{0.121,0.47,0.705}
\definecolor{linkblue}{rgb}{0.082,0.310,0.537}
\definecolor{dark brown}{rgb}{0.651, 0.337, 0.157}
\let\emph\relax
\DeclareTextFontCommand{\emph}{\color{defblue}\em}
\Crefname{figure}{Figure}{Figures}
\crefname{proposition}{Proposition}{Propositions}
\Crefname{observation}{Observation}{Observations} %
\crefname{observation}{Observation}{Observations} %
\DeclarePairedDelimiter\set{\{}{\}}
\DeclarePairedDelimiter\abs{\lvert}{\rvert}
\DeclarePairedDelimiter\croc{\langle}{\rangle}
\def\Oh{\ensuremath{\mathcal{O}}}
\DeclareMathOperator{\Obs}{obs}
\title{Outside-Obstacle Representations\\ with All Vertices on the Outer Face}
\titlerunning{Outside-Obstacle Representations with All Vertices on the Outer Face}
\author{Oksana Firman}{Institut für Informatik, Universität Würzburg,
  Germany}{firstname.lastname@uni-wuerzburg.de}%
{https://orcid.org/0000-0002-9450-7640}{Partially
  funded by DFG project Wo~758/9-1}
\author{Philipp Kindermann}{Informatikwissenschaften, Universität Trier, Germany}{kindermann@uni-trier.de}{https://orcid.org/0000-0001-5764-7719}{}
\author{Jonathan Klawitter}{School of Computer Science, University of Auckland, New Zealand}{jo.klawitter@gmail.com}{https://orcid.org/0000-0001-8917-5269}{Beyond Prediction Data Science Research Programme (MBIE grant UOAX1932).}
\author{Boris Klemz}{Institut für Informatik, Universität Würzburg, Germany}{firstname.lastname@uni-wuerzburg.de}{https://orcid.org/0000-0002-4532-3765}{}
\author{Felix Klesen}{Institut für Informatik, Universität Würzburg, Germany}{firstname.lastname@uni-wuerzburg.de}{https://orcid.org/0000-0003-1136-5673}{}
\author{Alexander Wolff}{Institut für Informatik, Universität Würzburg, Germany \and \url{https://www.informatik.uni-wuerzburg.de/en/algo/team/wolff-alexander}}{}{https://orcid.org/0000-0001-5872-718X}{}
\authorrunning{Firman, Kindermann, Klawitter, Klemz, Klesen, and Wolff} 
\keywords{obstacle representation, visibility graph, outside obstacle}
\begin{document}
\maketitle

\pdfbookmark[1]{Abstract}{Abstract} 
\begin{abstract}
  An \emph{obstacle representation} of a graph~$G$ consists of a set
  of polygonal obstacles and a drawing of~$G$ as a \emph{visibility graph} 
  with respect to the obstacles: 
  vertices are mapped to points and edges to straight-line segments 
  such that each edge avoids all obstacles whereas each non-edge 
  intersects at least one obstacle.
  Obstacle representations have been investigated quite
  intensely over the last few years.
  Here we focus on \emph{outside-obstacle representations} (OORs) that use
  only one obstacle in the outer face of the drawing.
  It is known that every outerplanar graph admits such a representation. 
  
  We strengthen this result by showing that every (partial) 2-tree has
  an OOR.  We also consider restricted versions of OORs
  where the vertices of the graph form a convex or even a regular polygon.
  We characterize when the complement of a tree and when a complete graph
  minus a simple cycle admits a convex OOR.
  We construct regular OORs for all (partial) outerpaths,
  cactus graphs, and grids.
\end{abstract}

\begin{figure}[tb]
  \begin{subfigure}[t]{.39\linewidth}
	\centering
	\includegraphics{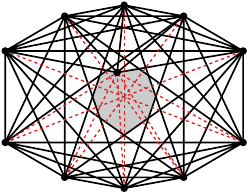}
        \caption{A graph that admits an inside-obstacle representation but
          no outside-obstacle representation~\cite{clpw-ov1o-GD16}.}
        \label{fig:11-vertex-graph}
  \end{subfigure}
  \hfill
  \begin{subfigure}[t]{.58\linewidth}
	\centering
	\includegraphics{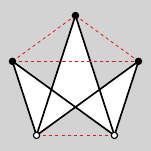}
        \caption{The complete bipartite graph $K_{2,3}$ is the smallest
          graph that contains a cycle and admits an
          outside-obstacle representation but no inside-obstacle
          representation~\cite{clpw-ov1o-GD16}.
        }
  \end{subfigure}
  \caption{Inside- and outside-obstacle representations of graphs.
    The gray regions represent the obstacles; the dashed line segments
    represent the non-edges.}
  \label{fig:teaser}
\end{figure}


\section{Introduction} 
\label{sec:intro}

Recognizing graphs that have a certain type of geometric
representation is a well-established field of research dealing with, 
e.g., geometric intersection graphs, visibility graphs, and graphs admitting certain contact representations.   
Given a set $\mathcal C$ of \emph{obstacles}
(here, connected polygonal regions, which we consider to be open)
and a set~$P$ of points in the plane, the \emph{visibility
graph}~$G_{\mathcal C}(P)$ has a vertex for each point
in~$P$ and an edge~$pq$ for any two points~$p$ and~$q$ in~$P$ that can
\emph{see} each other, that is, the line segment $\overline{pq}$
connecting $p$ and $q$ does not intersect any obstacle in~$\mathcal C$
and does not contain any other point of~$P$.
An \emph{obstacle representation} of a graph $G$ consists of a
set~$\mathcal{C}$ of obstacles in the plane and a mapping of the
vertices of~$G$ to a set~$P$ of points such that~$G = G_{\mathcal C}(P)$.
Note that, in such a representation, for each \emph{non-edge} $pq$
of~$G$, that is, for each non-adjacent vertex pair, the line segment
$\overline{pq}$ {\em must} intersect an obstacle in~$\mathcal C$.
The vertex--point mapping defines a straight-line drawing~$\Gamma$ of
$G_{\mathcal C}(P)$. We planarize~$\Gamma$ by replacing all
intersection points by dummy vertices.  The faces of this
planarization are open polygonal regions.  We call the unbounded
face the \emph{outer face} of~$\Gamma$.  We differentiate between
two types of obstacles: \emph{outside} obstacles lie in the 
outer face of the drawing, and \emph{inside} obstacles
lie in the complement of the outer face; see~\cref{fig:teaser}.

Every graph trivially admits an obstacle representation:
take an arbitrary straight-line drawing without collinear vertices and
``fill'' each face with an obstacle.
This, however, can lead to a large number of obstacles, 
which motivates the optimization problem of finding an obstacle
representation with the minimum number of obstacles. 
For a graph $G$, the \emph{obstacle number} $\Obs(G)$ is
the smallest number of obstacles that suffice to represent~$G$ as a
visibility~graph.

While most previous work (discussed below) has focused on establishing
bounds for the obstacle number of general graphs, we are interested in
understanding which graphs can be represented by a {\em single} obstacle.
In particular, we focus on \emph{outside obstacle representations (OORs)},
that is, obstacle representations with a single outside obstacle
and without any inside obstacles.  
For such a representation, it suffices to specify the positions of
the vertices of the given graph;
the outside obstacle is simply the whole outer face of the representation.
In an OOR, every non-edge of the graph must thus intersect the outer face.
We consider three special types of OORs:
In~a \emph{convex} OOR, the vertices must be in convex position;
in a \emph{circular} OOR, the vertices must lie on a circle; and	
in a \emph{regular} OOR, the vertices must form a regular $n$-gon, 
where $n$ is the number of vertices of the graph.
For examples, refer to \cref{fig:examples}.

\begin{figure}[tb]
  \centering
  \begin{subfigure}[t]{.36\linewidth}
	\centering
	\includegraphics{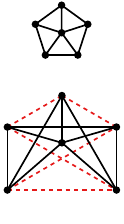}
	\caption{The wheel graph~$W_6$ does not ad\-mit a convex OOR
          (\cref{clm:ngon:6vertices}).}
	\label{fig:not-convex}
  \end{subfigure}
  \hfill
  \begin{subfigure}[t]{.42\linewidth}
	\centering
	\includegraphics[page=2]{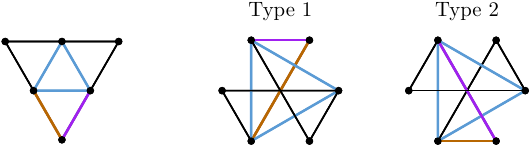}
	\caption{Graph that admits a circular but not a regular OOR
          (move vertex~7 towards $x$).}
	\label{fig:outerplanar7}
  \end{subfigure}
  \hfill
  \begin{subfigure}[t]{.15\linewidth}
	\centering
	\includegraphics{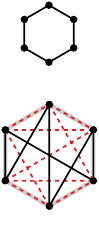}
	\caption{$C_6$ admits a regular OOR.}
	\label{fig:regularOOR}
  \end{subfigure}
  \caption{Non-convex, circular, and regular outside-obstacle
    representations (OORs).  Graph edges are solid black, non-edges
    are dashed red (in (b) only one non-edge is highlighted).}
  \label{fig:examples}
\end{figure}

In general, the class of graphs that admit an OOR is not closed under taking subgraphs, 
but the situation is different for graphs admitting a \emph{reducible} OOR, 
meaning that all of its edges are incident to the outer face.
In this case, we can simply extend the outside obstacle to intersect
any edge we want to remove.

\begin{observation} \label{clm:reducible}
  If a graph $G$ admits a reducible OOR, then every subgraph of~$G$
  also admits such a representation.
\end{observation}

\subparagraph{Previous Work.}
Alpert, Koch, and Laison~\cite{akl-ong-DCG10} introduced the notion of
the obstacle number of a graph.
They also introduced the notion of an \emph{inside obstacle representations (IOR)}, 
that is, an obstacle representation without an outside obstacle. 
They characterized the class of graphs that have an IOR with a single convex obstacle 
and showed that every outerplanar graph has an OOR.
Chaplick, Lipp, Park, and Wolff~\cite{clpw-ov1o-GD16} proved that the class
of graphs with an IOR is incomparable with the class of graphs with an
OOR by observing that no path admits an IOR and the graph in
\cref{fig:11-vertex-graph}, which admits an IOR, does not admit an OOR.
They showed that any graph with at most seven vertices has an OOR, 
which does not hold for a specific 8-vertex~graph.
They also showed that the following \emph{sandwich version} of the
outside-obstacle representation problem is NP-hard: 
Given two graphs~$G$ and~$H$ with the same vertex set~$V$ such that
$G$ is a subgraph of~$H$, is there a graph~$K$ with vertex set~$V$
that is a supergraph of~$G$ and a subgraph of~$H$ and admits an
outside-obstacle representation?  Analogous hardness results hold with
respect to inside and general obstacles.

Alpert, Koch, and Laison~\cite{akl-ong-DCG10} further showed that $\Obs(K^*_{a,b}) \le 2$ for any $a \le b$, 
where $K^*_{a,b}$ is the complete bipartite graph $K_{a,b}$ minus a matching of size~$a$.
They also proved that $\Obs(K^*_{5,7}) = 2$.
Pach and Sar\i\"{o}z~\cite{ps-sglon-GC11} showed that $\Obs(K^*_{5,5})=2$. 
Berman, Chappell, Faudree, Gimbel, Hartman, and Williams~\cite{bcfghw-gong1-JGAA16} suggested 
some necessary conditions for a graph to have obstacle number~1.
They gave a SAT formula that they used to find a {\em planar}
10-vertex graph~$X_4$ (of treewidth~4) that has no 1-obstacle
representation; see \cref{fig:x4}.

\begin{figure}[tbh]
  \begin{subfigure}[t]{.25\linewidth}
    \centering    
    \includegraphics[page=1]{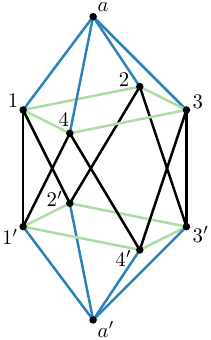}
    \caption{the graph $X_4$}
  \end{subfigure}
  \hfill
  \begin{subfigure}[t]{.71\linewidth}
    \centering    
    \includegraphics[page=2]{x4new}
    \caption{a 2-obstacle representation of
      $X_4$~\cite{bcfghw-gong1-JGAA16}}
  \end{subfigure}
  
  \caption{The so-called gyroelongated square bipyramid $X_4$, which
    has treewidth~4, does not admit a representation with a single
    obstacle.}
  \label{fig:x4}
\end{figure}

Obviously, any $n$-vertex graph has obstacle number~$\Oh(n^2)$.
Balko, Cibulka, and Valtr~\cite{bcv-dgusno-DCG18} improved this to~$\Oh(n\log n)$.
On the other hand, Balko, Chaplick, Gupta, Hoffmann, Valtr, and Wolff~\cite{bcghvw-bcong-ESA22} 
showed that there are $n$-vertex~graphs whose obstacle number
is $\Omega(n/\log\log n)$, improving previous lower bounds, e.g.,
\cite{akl-ong-DCG10,dm-on-EJC15,mpp-lbong-EJC12,mps-glon-WG10}.
They also showed that, when restricting obstacles to {\em convex} polygons,
sometimes a linear number of obstacles is needed.
Furthermore, they showed that computing the obstacle number
of a graph~$G$ is fixed-parameter tractable in the vertex cover number
of~$G$ and that it is NP-hard to decide whether a given graph
admits an obstacle representation using a given simple polygon as
(outside-) obstacle.

\subparagraph{Contribution.}
We do a detailed analysis of the class of graphs that admit OORs:
we construct graphs that do not admit convex OORs and we find
subclasses that admit convex, circular, or regular OORs.
We first strengthen the result of Alpert, Koch, and Laison~\cite{akl-ong-DCG10} 
regarding OORs of outerplanar graphs by showing 
that every (partial) 2-tree admits a reducible OOR
with all vertices on the outer face; see \cref{sec:nonconvex}.
Note that, to obtain an OOR of an outerplanar graph $G$,
we can delete edges from a \textit{reducible} OOR of a 2-tree that contains $G$ as subgraph.
Equivalently, every graph of treewidth at most two, 
which includes outerplanar and series-parallel graphs, admits such a
representation.  Note that this result and the above-mentioned
10-vertex planar graph~$X_4$ of treewidth~4 (see \cref{fig:x4})
that does not admit any
1-obstacle representation~\cite{bcfghw-gong1-JGAA16} leave open the
question whether every
(planar) graph of treewidth~3 admits an OOR; see our list of open
problems in \cref{sec:open}.

Then we establish two combinatorial conditions for
the existence of convex OORs; see \cref{sec:conditions}.
In particular, we introduce a necessary condition 
that can be used to show that a given graph does {\em not} admit a
convex OOR as,~e.g., the graph~$W_6$ in \cref{fig:not-convex}.
We apply these conditions to characterize
when the complement of a tree and when a complete graph minus a simple cycle admits a convex OOR.
We further construct {\em regular} reducible OORs for all outerpaths, grids, and cacti; see \cref{sec:ngon}.  
The result for grids strengthens an observation by Dujmovi\'{c} and Morin~\cite[Fig.~1]{dm-on-EJC15},
who showed that grids have (outside) obstacle number~1.

\subparagraph{Notation.}
For a graph $G$, let \emph{$V(G)$} be the vertex set of $G$, and let
\emph{$E(G)$} be the edge set of~$G$.  We use $n = \abs{ V(G) }$ where
the graph is clear from the context.  Given a cyclic
order $\sigma = \croc{ v_1, v_2, \dots, v_n }$ of~$V(G)$ and indices~$i$
and~$j$ with $1 \le i \neq j \le n$, we write \emph{$[v_i,v_j)$} to refer
to the subsequence $\croc{ v_i,v_{i+1},\dots,v_{j-1} }$ of $\sigma$,
where indices are interpreted modulo $n$.  Subsequences
\emph{$(v_i,v_j)$} and \emph{$[v_i,v_j]$} are defined analogously.

\section{Outside-Obstacle Representations for Partial 2-Trees}
\label{sec:nonconvex}

The graph class of \emph{$2$-trees} is recursively defined as follows:
$K_3$ is a $2$-tree. Further, any graph is a 2-tree if it is obtained
from a $2$-tree~$G$ by introducing a new vertex~$x$ and making $x$
adjacent to both endpoints of some edge $uv$ in~$G$. 
We say that~$x$ is \emph{stacked} on~$uv$ and 
call the edges~$xu$ and $xv$ the \emph{parent edges} of~$x$.

\begin{theorem} \label{clm:nonconvex}
  Every $2$-tree admits a reducible OOR with all vertices on the outer face.
\end{theorem}
\begin{proof}
  It follows readily from the definition of $2$-trees that every
  $2$-tree~$T$ can be constructed through the following iterative
  procedure, during which every vertex is marked either as active or
  inactive.  Once a vertex is inactive, it remains inactive for the
  remainder of the construction.
  \begin{enumerate}[(S1)] 
  \item\label{enum:2tree-base}
    Start with some edge, called the \emph{base} edge and mark its vertices as \emph{inactive}.
    Stack any number of vertices (but at least one vertex)
    onto the base edge and mark the new vertices as \emph{active}.
  \item\label{enum:2tree-step} 
	Pick an active vertex~$v$ and stack any number of new vertices
        (possibly none) onto each of its two parent edges.
	The new vertices are marked as active and $v$ is marked as inactive.
  \item\label{enum:2tree-repeat} 
    While there are active vertices, repeat step~(S\ref{enum:2tree-step}).
  \end{enumerate} 
Observe that step~(S\ref{enum:2tree-step}) is performed exactly once for each vertex that is not incident to the base edge.
We construct a drawing of $T$ by geometrically implementing the iterative procedure described above,
so that after every step of the algorithm the present part of the graph is realized
as a straight-line drawing satisfying the following set of invariants:
\begin{enumerate}[({I}1)] 
  \item \label{enum:vertex}%
    Each vertex~$v$ that is not incident to the base edge is
    associated with an open circular arc $C_v$ centered at~$v$ that lies
    completely in the outer face; see \cref{fig:rotation}.  Moreover,
    the parent edges of~$v$ lie below~$v$ and contain the endpoints
    of~$C_v$.
  \item \label{enum:nonedge}%
    Each non-edge intersects the circular arc of at least one of its incident vertices.
  \item \label{enum:active}%
    For each active vertex~$v$, the region $R_v$ enclosed
    by~$C_v$ and the two parent edges of~$v$ (shaded gray in
    \cref{fig:rotation}) is \emph{empty}, meaning that~$R_v$ does not
    contain any vertex of~$T$ and does not intersect any edge or
    circular arc.  (Combined with (I\ref{enum:vertex}), it follows
    that~$R_v$ lies completely in the outer face.)
  \item \label{enum:outerface}%
    Every vertex is incident to the outer face.
\end{enumerate}

Once the procedure terminates, we have indeed obtained the desired drawing:
invariants~(I\ref{enum:vertex}) and~(I\ref{enum:nonedge}) imply 
that each non-edge passes through the outer face and, 
hence, we have indeed obtained an OOR.
Moreover, invariant~(I\ref{enum:vertex}) implies 
that each non-base edge is incident to the outer face of the drawing.
The base edge will be drawn horizontally.
By the second part of invariant~(I\ref{enum:vertex}), 
all vertices not incident to the base edge are above the base edge.
Consequently, the base edge is incident to the outer face as well 
and, hence, the representation is reducible.  
Finally, by invariant~(I\ref{enum:outerface}), every vertex belongs to the outer face.

\subparagraph{Construction.}
To carry out step~(S\ref{enum:2tree-base}), we draw the base edge horizontally 
and place the stacked vertices on a common horizontal line above the base edge; see \cref{fig:G1}.
Circular arcs that satisfy the invariants are now easy to define.

\begin{figure}[tb]
  \centering
  \begin{subfigure}[t]{0.33\textwidth}
	\centering
	\includegraphics{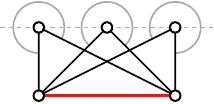}
	\caption{Step (1); the base edge is red.}
	\label{fig:G1}
  \end{subfigure}
  
  \vspace{1.5em}
  
  \begin{subfigure}[t]{0.85\textwidth}
	\centering
	\includegraphics[page=2]{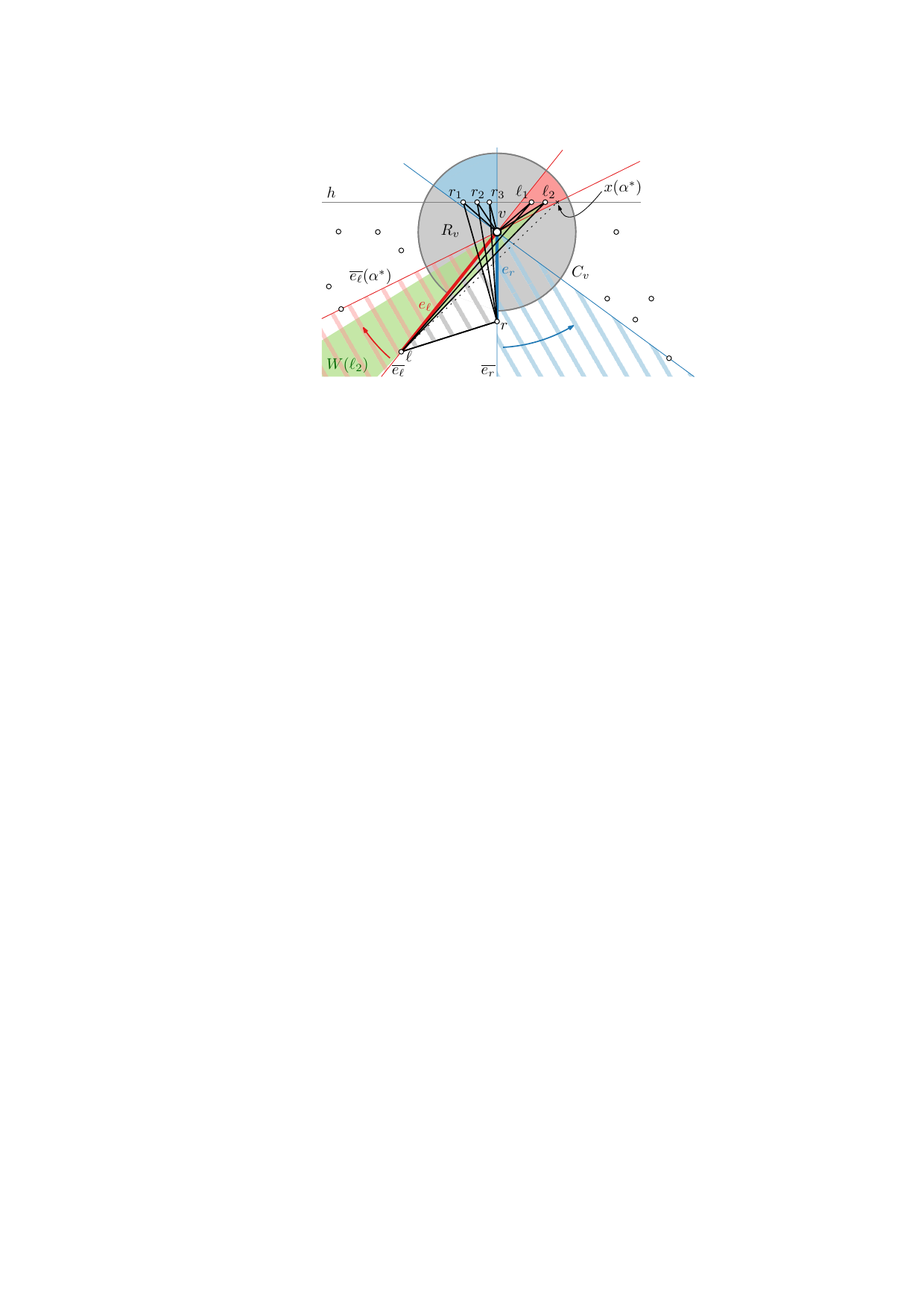}
	\caption{Step (2); the hatched areas do not contain any vertices.}	
	\label{fig:rotation}
  \end{subfigure}
  \caption{Construction steps in the proof of \cref{clm:nonconvex}.}
  \label{fig:twoTree:construction}	
\end{figure}

Let~$\Gamma$ be a drawing of the graph
obtained after step~(S\ref{enum:2tree-base})
and some number of iterations of step~(S\ref{enum:2tree-step})
such that~$\Gamma$ is equipped with a set of circular arcs
satisfying the invariants~(I\ref{enum:vertex})--(I\ref{enum:outerface}).
We describe how to carry out another iteration of
step~(S\ref{enum:2tree-step}) while maintaining the invariants.

Let~$v$ be an active vertex.
By invariant~(I\ref{enum:vertex}), both parent edges of~$v$ are below~$v$.
Let~$e_\ell=v\ell$ and~$e_r=vr$ be the left and the right parent edges of~$v$,
respectively.  Let $\ell_1,\ell_2,\dots,\ell_i$ and $r_1,r_2,\dots,r_j$
be the vertices stacked onto~$e_\ell$ and~$e_r$, respectively. 
We refer to $\ell_1,\ell_2,\dots,\ell_i$ and $r_1,r_2,\dots,r_j$ as the \emph{new} vertices; the vertices of~$\Gamma$ are called \emph{old}.
See \cref{fig:rotation} for the following construction details. 
We place all the new vertices on a common horizontal line~$h$ that intersects~$R_v$ above~$v$.
The vertices $\ell_1,\ell_2,\dots,\ell_i$ are placed inside~$R_v$, to the right of the line~$\overline{e_\ell}$ extending~$e_\ell$.
Symmetrically, $r_1,r_2,\dots,r_j$ are placed inside~$R_v$, to the left of the line~$\overline{e_r}$ extending~$e_r$.
For $k \in \{1,2,\dots,i\}$, let~\emph{$W(\ell_k)$} be the smallest open
wedge with apex~$\ell_k$ that contains~$e_\ell$.  For example, in
\cref{fig:rotation}, the green shaded wedge is $W(\ell_2)$.  Note that
$W(\ell_k)$ is bounded by the two rays that go from~$\ell_k$
through~$v$ and through~$\ell$.  For $k \in \{1,2,\dots,j\}$,
let~\emph{$W(r_k)$} be defined symmetrically.

We place $\ell_1,\ell_2,\dots,\ell_i$ close enough to~$\overline{e_\ell}$ and $r_1,r_2,\dots,r_j$ close enough to~$\overline{e_r}$ 
such that the following properties are satisfied:
\begin{enumerate}[(A)] 
  \item \label{enum:propA}
    None of the parent edges of the new vertices intersects~$C_v$.
  \item \label{enum:propB}
    For each new vertex~$y$, the wedge $W(y)$ does not contain any
    vertices of~$T$.
\end{enumerate}
These properties are easy to achieve:
let~$\overline{e_\ell}(\alpha)$ be the line that is obtained by
    rotating~$\overline{e_\ell}$ clockwise around~$v$ by
    angle~$\alpha$.
Clearly, there is an angle~$\alpha^*$ such that
\begin{enumerate}[(A')] 
  \item \label{enum:propA'}
    the intersection point~$x(\alpha^*)$
    of~$\overline{e_\ell}(\alpha^*)$ and~$h$ lies in~$R_v$,
    the line segment $\overline{x(\alpha^*)\ell}$ does not
    intersect~$C_v$, and
  \item \label{enum:propB'} the unbounded wedge with apex~$v$
    (hatched in red in \cref{fig:rotation}) that goes
    from~$\overline{e_\ell}$ to~$\overline{e_\ell}(\alpha^*)$ in
    clockwise direction contains no vertices.
\end{enumerate}
We place the vertices $\ell_1,\ell_2,\dots,\ell_i$
between~$\overline{e_\ell}$ and~$x(\alpha^*)$.
Then property~(\ref{enum:propA'}') guarantees property~(\ref{enum:propA}).
Similarly, property~(\ref{enum:propB'}') and invariants~(I\ref{enum:active})
and~(I\ref{enum:outerface}) for~$\Gamma$ imply property~(\ref{enum:propB}).
The vertices $r_1,r_2,\dots,r_j$ are placed symmetrically
by rotating~$\overline{e_r}$ around~$v$ counterclockwise.

\subparagraph{Correctness.}
We now show that the invariants are maintained during the construction. 
By invariant~(I\ref{enum:active}), for each old vertex~$v$, the
region~$R_v$ is completely contained in the outer face of~$\Gamma$.
Hence, it is easy to define circular arcs for the new vertices
that satisfy invariant~(I\ref{enum:vertex}).
To show that invariant~(I\ref{enum:vertex}) also holds for the
circular arcs of the old vertices, we argue as follows.
By property~(\ref{enum:propA}) of the construction,
the parent edge~$e$ of a new vertex~$v$ can be decomposed as follows:
a line segment~$e_1$ that lies in~$R_v$ and a line segment~$e_2$ 
that lies in the triangle $\triangle \ell r v$ formed by the endpoints
of the parent edges of~$v$ (hatched in gray in \cref{fig:rotation}).

By invariant~(I\ref{enum:active}) for~$\Gamma$, the region~$R_v$ is empty
and, hence, $e_1$ does not intersect the circular arc of any old vertex.
By invariant~(I\ref{enum:vertex}) for~$\Gamma$, the circular arcs
of the old vertices lie in the outer face of~$\Gamma$
and, hence, it follows that $e_2$ also does not intersect the circular
arc of any old vertex.
Consequently, invariant~(I\ref{enum:vertex}) is maintained for the circular arcs of old vertices.

Invariant~(I\ref{enum:nonedge}) is retained for the non-edges that join two old
vertices since the circular arcs of these vertices have not been changed.
Property~(\ref{enum:propB}) and the fact that all new vertices are placed on $h$ imply 
that each of the non-edges incident to a new vertex~$w$ intersect~$C_w$.
Hence, invariant~(I\ref{enum:nonedge}) is also satisfied for the new non-edges.

Invariant~(I\ref{enum:active}) holds for the circular arcs of the
new vertices by invariant~(I\ref{enum:active}) for~$v$
in~$\Gamma$ and by~(I\ref{enum:vertex}) for the new vertices.
To see that invariant~(I\ref{enum:active}) holds for the circular
arcs of the old vertices, let~$u\neq v$ be an old vertex.
Let~$e$ be a parent edge of a new vertex and recall the definitions
of~$e_1$ and~$e_2$ from above.  The part~$e_1$ lies in~$R_v$ and~$e_2$
does not pass through the outer face of~$\Gamma$.
Hence, it follows that invariant~(I\ref{enum:active}) is retained for~$u$.

By invariant~(I\ref{enum:active}) for~$\Gamma$, 
the region~$R_v$ is contained in the outer face of~$\Gamma$.
Hence, by construction, invariant~(I\ref{enum:outerface}) holds for~$v$ and the new vertices.
Moreover, invariant~(I\ref{enum:outerface}) is also retained for the remaining vertices 
since, by construction, the edges incident to new vertices
intersect the outer face of~$\Gamma$ in~$R_v$ only.
\end{proof}

\section{Convex Outside Obstacle Representations}
\label{sec:conditions}

In this section we introduce a sufficient condition and a necessary
condition for a graph to admit a convex
OOR 
and then use these conditions to characterize the complements of trees
and cycles
that admit convex OORs.

We start with the sufficient condition.
Suppose that a graph~$G$ admits a convex OOR~$\Gamma$.
Let~$\sigma$ be the clockwise cyclic order of the
vertices of~$G$ along the convex hull of~$\Gamma$. 
If all neighbors of a vertex~$v$ of $G$ are consecutive in~$\sigma$,
we say that $v$ has the \emph{consecutive-neighbors property}, 
which implies that all non-edges incident to $v$ are consecutive around $v$ 
and trivially intersect the outer face 
in the immediate vicinity of~$v$; see \cref{fig:conditions:cnp}.
Note that this combinatorial condition is independent of the exact
location of the vertices as long as they are in convex position and
their clockwise order is fix.  This yields the following result.

\begin{lemma}[Consecutive-neighbors property]
  \label{clm:consecutiveNeighbors}
  Let $G$ be a graph, and let $\sigma$ be a cyclic order of
  $V(G)$.  If there is a subset $V'$ of $V(G)$ such that
  \begin{itemize} 
  \item every non-edge of $G$ is incident to a vertex in~$V'$ and
  \item every vertex in~$V'$ has the consecutive-neighbors property
    with respect to~$\sigma$,
  \end{itemize}
  then $G$ admits a convex OOR with cyclic vertex order~$\sigma$ on
  any set of points in convex position.
\end{lemma}

\begin{figure}[bh]
  \centering
  \begin{subfigure}[t]{.51\linewidth}
	\centering
 	\includegraphics[page=2]{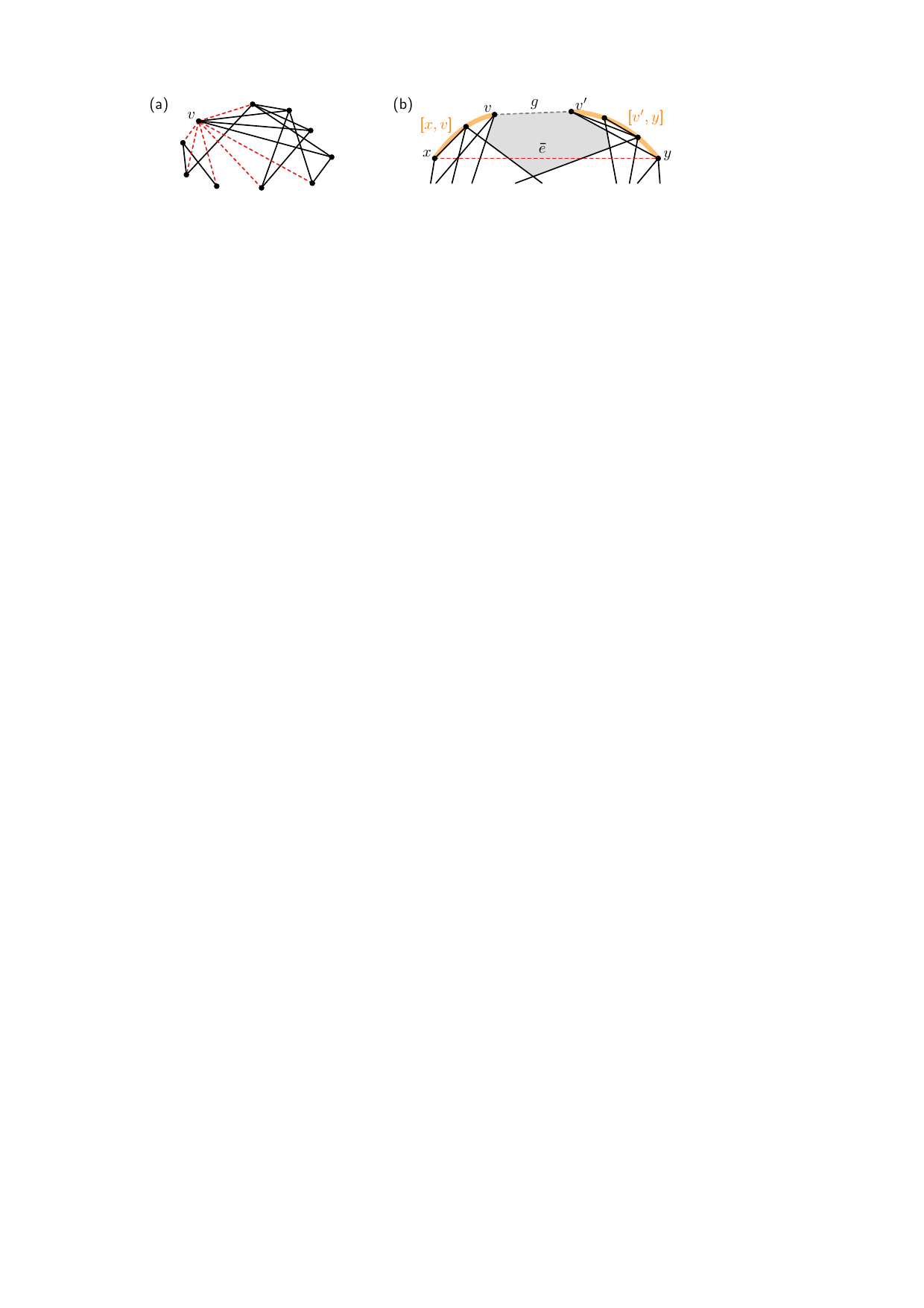}
	\caption{Vertex $v$ has the consecutive-neighbors property.}
	\label{fig:conditions:cnp}
  \end{subfigure}
  \hfill
  \begin{subfigure}[t]{.46\linewidth}
	\centering
 	\includegraphics[page=3]{conditions}
	\caption{Gap~$g$ is a candidate gap for the non-edge~$\bar{e}$.}		
	\label{fig:conditions:candidategap}
  \end{subfigure}
  \caption{Examples for (\subref{fig:conditions:cnp}) the
    consecutive-neighbors property and
    (\subref{fig:conditions:candidategap}) a candidate gap.}
  \label{fig:conditions}
\end{figure}

If a graph~$G$ fulfills the condition stated in the above lemma, then
we say that $G$ has the \emph{consecutive-neighbors property.}
Note that the consecutive-neighbors property is not a necessary
condition for a graph to admit a convex OOR.  For example, we will
show that every grid graph admits a convex (even a regular) OOR
(\cref{thm:grids}), but the OORs that we construct do not fulfill the
consecutive-neighbors property (see, for example, vertex~$s$ in
\cref{fig:grid}).

Next, we derive the necessary condition.
For any two consecutive vertices $v$ and $v'$ in~$\sigma$
that are not adjacent in~$G$, we say that the line segment
$g = \overline{vv'}$ is a \emph{gap}.
Then the \emph{gap region} of $g$ is the inner face of $\Gamma+vv'$
incident to~$g$; see the gray region in \cref{fig:conditions:candidategap}.
We consider the gap region to be open, but add to it the relative interior
of the line segment $\overline{vv'}$, so that the non-edge $vv'$
intersects its own gap region.
Observe that each non-edge $\bar e = xy$ that intersects the outer face
has to intersect some gap region in an OOR.
Suppose that $g$ lies between $x$ and $y$ with
respect to~$\sigma$, that is, $[v,v'] \subseteq [x,y]$.
We say that $g$ is a \emph{candidate gap} for $\bar e$ if there is no
edge that connects a vertex in $[x,v]$ and a vertex in $[v',y]$.
(In \cref{fig:conditions:candidategap}, the two intervals are
highlighted in orange.)
Note that~$\bar e$ can intersect only gap regions of candidate gaps.

\begin{lemma}[Gap condition]
  \label{clm:gap-condition}
  A graph $G$ admits a convex OOR with cyclic vertex order~$\sigma$
  only if, for every non-edge of~$G$, there exists a candidate gap with
  respect to~$\sigma$.
\end{lemma}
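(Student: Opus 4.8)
The plan is to prove the ``only if'' direction directly: assume $G$ has a convex OOR $\Gamma$ with hull order $\sigma$, fix an arbitrary non-edge $\bar e = xy$, and exhibit a candidate gap for it. The two remarks preceding the lemma isolate the two steps. Since $\Gamma$ is an OOR, $\overline{xy}$ must meet the outer face; I would first show that it therefore meets the gap region of some gap $g$, and then show that any gap whose gap region is met by $\overline{xy}$ is automatically a candidate gap for $\bar e$. Chaining the two steps yields the claim.

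For the first step I would argue that, inside the convex hull, the outer face is covered by the gap regions. If $x$ and $y$ are consecutive in $\sigma$, then $\overline{xy}$ is itself a gap and its relative interior lies in its own gap region by definition, so assume otherwise; then any point $p \in \overline{xy}$ lying in the outer face lies strictly inside the hull. Because the outer face is the (open, connected) unbounded face of the planarization, there is a curve from $p$ to infinity avoiding all edges and vertices. Such a curve can cross the hull boundary only through a gap $g=\overline{vv'}$, since it cannot touch a drawn boundary edge or a vertex; and the portion of the curve from $p$ until it first reaches $g$ stays inside the hull and crosses no edge of $\Gamma$. Hence $p$ and the points just inside $g$ lie in the same face of $\Gamma+vv'$, namely the inner face incident to $g$, which is exactly the gap region of $g$. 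So $\overline{xy}$ meets the gap region of $g$.

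For the second step, suppose $\overline{xy}$ meets the gap region of a gap $g=\overline{vv'}$. Since $v,v'$ are consecutive and $x,y$ are vertices, the gap lies entirely in one of the two arcs determined by $x$ and $y$; after possibly swapping the names of $x$ and $y$ I may assume $[v,v']\subseteq[x,y]$, which gives the position requirement. For the edge requirement I would argue by contradiction: suppose an edge $\overline{ab}$ of $G$ joins some $a\in[x,v]$ to some $b\in[v',y]$. In convex position $a$ and $b$ both lie on the arc $[x,y]$, so $\overline{ab}$ does not cross the chord $\overline{xy}$; it splits the sub-polygon cut off by $\overline{xy}$ into the part $Q_1$ bounded by $\overline{ab}$ and the sub-arc through $v,g,v'$, and the part $Q_2$ containing $\overline{xy}$. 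The gap $g$ lies on the boundary of $Q_1$, and since the gap region is a single face of $\Gamma+vv'$ it cannot cross the edge $\overline{ab}$, so it is confined to $Q_1$; meanwhile $\overline{xy}\subseteq\overline{Q_2}$ and meets $\overline{ab}$ at most at a shared endpoint, which does not lie in the (open) gap region. Thus $\overline{xy}$ could not have met the gap region, a contradiction. Hence no such edge exists and $g$ is a candidate gap for $\bar e$.

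The routine parts (the consecutive-vertices base case, the orientation bookkeeping, and checking that a curve in the outer face cannot cross a drawn hull edge or vertex) are straightforward. I expect the genuine obstacle to be the separation argument in the second step: making rigorous that a single crossing edge $\overline{ab}$ over the gap forces the whole gap region to one side of $\overline{ab}$ while the chord stays on the other. The care needed is in invoking the precise definition of the gap region as the inner face of $\Gamma+vv'$ incident to $g$ (so that it is edge-respecting) and in handling the boundary cases $a=x$ or $b=y$, where $\overline{ab}$ shares an endpoint with $\overline{xy}$ yet still blocks it from reaching~$g$.
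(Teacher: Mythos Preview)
Your proposal is correct and follows precisely the route the paper takes: the two remarks preceding the lemma---that every non-edge meeting the outer face must meet some gap region, and that a non-edge can only meet the gap region of a candidate gap---constitute the paper's entire argument, and you have simply supplied the geometric details behind both observations. The separation argument you flag as the delicate point is exactly what the paper leaves implicit, and your handling of it (including the degenerate cases $a=x$ or $b=y$) is sound.
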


It remains an open problem whether the gap condition is also sufficient.
Nonetheless, we can use the gap condition for no-certificates. 
To this end, we derived a SAT formula from the following expression,
which checks the gap condition for every non-edge of a graph~$G$:
\begin{equation*}
\bigwedge_{xy\notin E(G)} \!
	\left[\bigvee_{v\in [x,y)} \!\! 
		\left(\bigwedge_{u\in [x,v], w\in (v,y]} \!\!\! uw\notin E(G)\right) 
	\lor 
	\bigvee_{v\in [y,x)} \!\!
		\left(\bigwedge_{u\in [y,v], w\in (v,x]} \!\!\! uw\notin E(G)\right)
	\right]
\end{equation*}
We have used this formula to test whether all
connected cubic graphs with up to 16 vertices admit convex OORs.
The only counterexample that we found was the Petersen graph; see
\cref{fig:petersen}.
The so-called Blanu\v{s}a snarks, the Pappus graph, the dodecahedron, and
the generalized Peterson graph $G(11,2)$ satisfy the gap condition.
The latter three graphs do admit convex OORs~\cite{Gol21}.

\begin{figure}
  \null\qquad
  \begin{subfigure}[b]{.28\linewidth}
    \centering    
    \includegraphics[page=1]{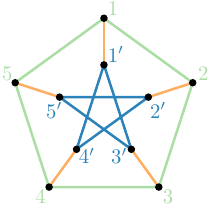}
    \caption{the Petersen graph}
  \end{subfigure}
  \hfill
  \begin{subfigure}[b]{.48\linewidth}
    \centering    
    \includegraphics[page=2]{petersen}
    \caption{a non-convex OOR of the Petersen graph~\cite{Gol21}}
  \end{subfigure}
  \qquad\null
  
  \caption{The Petersen graph does not admit a convex OOR.}
  \label{fig:petersen}
\end{figure}

The smallest graph (and the only 6-vertex graph) that does not
satisfy the gap condition is the wheel graph~$W_6$
(see \cref{clm:ngon:6vertices} in \cref{app:small-graphs}).
Hence, $W_6$ does not admit a {\em convex} OOR,
but it does admit a (non-convex) OOR; see~\cref{fig:not-convex}.

Next, we turn to dense graphs.

\subparagraph*{Complement of a Tree.}

For a graph~$G$, the graph~$\bar G$ with $V(\bar G)=V(G)$ and
$\bar E(G) = \set{uv \colon uv \not \in E(G)}$ is the \emph{complement}
of~$G$.  A \emph{caterpillar} is a tree that contains a path such that
all vertices are at distance at most~1 from the path.

\begin{theorem} \label{clm:caterpillar-tree}
  For any tree $T$, the graph~$\bar T$ has a convex OOR
  if and only if $T$ is a caterpillar.
\end{theorem}
\begin{proof}
We prove the statement in two steps.  
First, we show that, for every caterpillar~$C$, 
the graph~$\bar{C}$ has a convex OOR, in fact, a regular OOR.  
Then we show that, for every tree~$T$ that is not a caterpillar, 
$\bar T$ does not admit any convex OOR.

Let $C$ be a caterpillar, and let $\Pi=\croc{ p_1, p_2, \ldots, p_r }$
be a path in~$C$ such that every vertex in~$C$ has distance at most~1
from this path and such that $p_1$ and $p_r$ are vertices of degree~1
in~$C$; see \cref{fig:caterpillar} for an example.
For $i \in \set{2, \dots, r-1}$, let $\ell_1^i, \ell_2^i, \dots, \ell_{n_i}^i$ 
be the leaves adjacent to path vertex~$p_i$ (if any).
We arrange the vertices of~$\bar C$ in cyclic order as follows.  
First, we take the path vertices in the given order.
Then, for each $i \in \set{2, \dots, r-1}$, 
we insert the leaves adjacent to vertex~$p_i$ between~$p_i$ and~$p_{i+1}$ into the cyclic order; 
see \cref{fig:compl-caterpillar-oor}.

\begin{figure}[tb]
  \null\qquad
  \begin{subfigure}[b]{.35\linewidth}
    \centering
    \includegraphics[page=1]{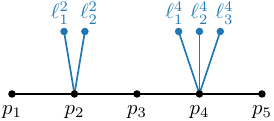}
    \caption{}
    \label{fig:caterpillar}
  \end{subfigure}
  \hfill
  \begin{subfigure}[b]{.35\linewidth}
    \centering
    \includegraphics[page=2]{caterpillar}
    \caption{}
    \label{fig:compl-caterpillar-oor}
  \end{subfigure}
  \qquad\null
  \caption{A caterpillar and a regular OOR of its complement.}
  \label{fig:compl-caterpillar}
\end{figure}   
  
The resulting cyclic order is $\sigma =
\langle p_1, p_2,  \ell_1^2, \ell_2^2, \ldots, \ell_{n_2}^2, \ldots, p_{r-1},
\ell_1^{r-1}, \ell_2^{r-1}, \ldots, \ell_{n_{r-1}}^{r-1}, p_r \rangle$.
Observe that every non-edge of $\bar{C}$ is incident to a vertex
in~$V(\Pi)$ and that, for every $i \in \{2,\dots,r\}$, vertex~$p_i$
in~$V(\Pi)$ has the consecutive-neighbors property with respect
to~$\sigma$ if we view its ``incoming'' non-edge from~$p_{i-1}$ as an
edge.  We may do this since this non-edge is still viewed as a
non-edge from its other endpoint.  Note that~$p_1$ trivially has the
consecutive-neighbors property.  Hence, by
\cref{clm:consecutiveNeighbors}, $\bar{C}$ admits a convex OOR with
cyclic vertex order~$\sigma$ on any set of points in convex
position, that is, $\bar{C}$ admits even a regular OOR.


Now we prove the second part of the statement.
Let $Y$ be the tree  that consists of a root~$c$ with three children~$\ell$, $m$, and $r$, 
each of which has one child, namely $\ell'$, $m'$, and $r'$, respectively; 
see \cref{fig:compl-tree-tree}.
Let $T$ be a tree that is not a caterpillar. 
Note that $T$ has a subtree that is isomorphic to~$Y$.

\begin{figure}[b]
  \centering
  \subcaptionbox{\label{fig:compl-tree-tree}}{\includegraphics[page=1]{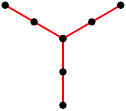}}
	\hfil
  \subcaptionbox{\label{fig:compl-tree-oor}}{\includegraphics[page=2]{K_7-tree}}
  \caption{(\subref{fig:compl-tree-tree}) The smallest tree $Y$ that is not a caterpillar and (\subref{fig:compl-tree-oor})
  a non-convex OOR of $\bar Y$.} 
  \label{fig:compl-tree}
\end{figure}

Let $\sigma$ be any cyclic order of $V(Y)$.    
We now show that $\bar{T}$ admits no convex OOR with respect to $\sigma$.
To this end, we find an edge $e$ of $Y$ (i.e., a non-edge of $\bar T$) 
that is a diagonal of a convex quadrilateral $Q$ formed by four
non-edges of~$Y$.  Observe that any non-edge of~$Y$ must
be an edge of~$\bar{T}$ (otherwise $T$ would contain a cycle).
Hence, the non-edge~$e$ of~$\bar{T}$ (being enclosed by a 4-cycle of
edges of~$\bar{T}$) does not have a candidate gap, which by
\cref{clm:gap-condition} implies that $\bar{T}$ does not admit a
convex OOR.

It remains to show the existence of~$e$ and~$Q$.
Without loss of generality, let $\croc{ c, r, m, \ell }$ be the order
of~$c$ and its children in~$\sigma$.  We distinguish four cases.
\begin{figure}[tb]
  \centering
  \begin{subfigure}[t]{.2\linewidth}
	\centering
	\includegraphics[page=1]{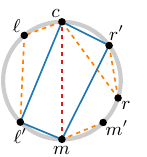}
	\caption{Case 1}
	\label{fig:tree-case1}
  \end{subfigure}
  \hfill
  \begin{subfigure}[t]{.2\linewidth}
	\centering
	\includegraphics[page=2]{cases-complement-tree}
	\caption{Case 2}
	\label{fig:tree-case2}
  \end{subfigure}
  \hfill
  \begin{subfigure}[t]{.2\linewidth}
	\centering
	\includegraphics[page=3]{cases-complement-tree}
	\caption{Case 3}
	\label{fig:tree-case3}
  \end{subfigure}
  \hfill
  \begin{subfigure}[t]{.2\linewidth}
	\centering
	\includegraphics[page=4]{cases-complement-tree}
	\caption{Case 4}
	\label{fig:tree-case4}
  \end{subfigure}
  \caption{Case distinction in the proof of \cref{clm:caterpillar-tree}:
	(\subref{fig:tree-case1})~Case~1: $c m$ is not intersected;
	(\subref{fig:tree-case2})~Case~2: $\ell \ell'$ intersects $c m$, $\alpha \cap Y \neq \emptyset$;
	(\subref{fig:tree-case3})~Case~3: $\ell \ell'$ intersects $c m$, $\alpha \cap Y = \emptyset$, 
	and $c$, $r$, $m'$ appear in this order in~$\sigma$;
	(\subref{fig:tree-case4})~Case~4: otherwise.}
  \label{fig:tree-cases}
\end{figure}
\begin{description}
\item[Case 1:] 
  None of the edges of $Y$ intersects $c m$; see \cref{fig:tree-case1}.

  \smallskip

  Then $e = c m$ lies inside the quadrilateral
  $Q = \langle c, r', m, \ell' \rangle$ formed by non-edges of~$Y$.
\end{description}
In the following three cases, we assume, without loss of generality,
that $\ell \ell'$ intersects $c m$.  Let $\alpha$ be the open
circular arc from $c$ to $\ell'$ in clockwise direction.
\begin{description}
\item[Case 2:] $\alpha \cap Y \neq \emptyset$, i.e., at least one
  vertex of $Y$ lies in $\alpha$, say $r$; see \cref{fig:tree-case2}.

  \smallskip

  Then $e = \ell \ell'$ lies inside the quadrilateral
  $Q = \langle \ell, r, \ell', m \rangle$.
\end{description}
In the remaining two cases, we assume that $\alpha \cap Y = \emptyset$.
\begin{description}
\item[Case 3:] The vertices $c$, $r$, and $m'$ appear in this order
  in~$\sigma$; see \cref{fig:tree-case3}.

  \smallskip

  Then $e = c r$ lies inside the quadrilateral
  $Q = \langle c, \ell', r, m' \rangle$.

  \medskip

\item[Case 4:] Otherwise; see \cref{fig:tree-case4}.

  \smallskip

  Then $e = m m'$ lies inside the quadrilateral
  $Q = \langle \ell, m', r, m \rangle$.
\end{description}
To conclude, $e$ and $Q$ always exist.
\end{proof}

\Cref{fig:compl-tree-tree} depicts the smallest tree~$Y$ that is not
a caterpillar and, hence, its complement~$\bar{Y}$ does not admit a
convex OOR.  The graph~$\bar{Y}$ does, however, admit an OOR;
see~\cref{fig:compl-tree-oor}.

\subparagraph*{Complete Graph Minus a Cycle.}

Using the gap condition (\cref{clm:gap-condition}), we can prove the
following theorem in a similar way as \cref{clm:caterpillar-tree}.
Let \emph{$C_k$} be a simple cycle of length~$k$.

\begin{theorem} \label{clm:complete-cycle}
  Let $3 \le k \le n$.  Then the graph~$G_{n,k} = K_n - E(C_k)$ admits a
  convex OOR if and only if $k \in \set{3, 4, n}$.
\end{theorem}
\begin{proof}
First, we show that, for $k \in \set{3, 4, n}$, 
the graph $G_{n,k}$ admits a convex OOR. 
To this end, we place the vertices $v_1, \dots, v_k$ of $C_k$ as an interval on a circle. 
If $k < n$, we place the remaining vertices in an arbitrary order, 
also as an interval, on the same circle.
For~$k = 3$, the vertex order of $C_3$ is determined; see \cref{fig:cycle-C3}.
For $k = 4$, we place the vertices of~$C_4$ in the order $\croc{ v_1, v_2, v_4, v_3 }$;
see \cref{fig:cycle-C4}. 
For $k = n$, we take the vertex order of~$C_n$; see \cref{fig:cycle-Cn}.
In the cases $k = 3$ and $k = n$, let $V' = V(G_{n,k})$.
In the case $k = 4$, let~$V' = V(G_{n,k}) \setminus \set{v_2, v_4}$ and
note that $v_2$	and $v_4$ are adjacent in $G_{n,k}$.
In all cases all vertices in $V'$ satisfy the consecutive-neighbors property 
and $V'$ covers all non-edges.
Therefore, by \cref{clm:consecutiveNeighbors}, the graph~$G_{n,k}$ admits a
convex OOR with	respect to the cyclic vertex order (depending on $k$) described above. 
Note that, in all cases, we fixed only the cyclic order of vertices 
and not their specific position. Thus, we can obtain the regular~OORs.
        
\begin{figure}[tb]
  \centering
  \begin{subfigure}[t]{.2\linewidth}
	\centering
	\includegraphics[page=1]{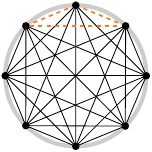}
	\caption{$G_{8,3}$}
	\label{fig:cycle-C3}
  \end{subfigure}
  \hfil
  \begin{subfigure}[t]{.2\linewidth}
	\centering
	\includegraphics[page=2]{complement-cycle-3-4-n}
	\caption{$G_{8,4}$}
	\label{fig:cycle-C4}
  \end{subfigure}
  \hfil
  \begin{subfigure}[t]{.2\linewidth}
	\centering
	\includegraphics[page=3]{complement-cycle-3-4-n}
	\caption{$G_{8,8}$}
	\label{fig:cycle-Cn}
  \end{subfigure}
  \caption{Regular OORs of the graph $G_{n,k}$ for $n=8$ and
		(\subref{fig:cycle-C3})~$k = 3$,
		(\subref{fig:cycle-C4})~$k = 4$,
		(\subref{fig:cycle-Cn})~$k = n$.}
  \label{fig:small-cycles}
\end{figure}    

\begin{figure}[b]
  \centering
  \begin{subfigure}[t]{.35\linewidth}
	\centering
	\includegraphics[page=1]{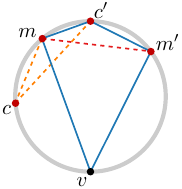}
	\caption{Case 1: $(c,c') \cap V(C_k) = \{m\}$}
	\label{fig:cycle-case1}
  \end{subfigure}
  \hfil
  \begin{subfigure}[t]{.35\linewidth}
	\centering
	\includegraphics[page=2]{cases-complement-cycle}
	\caption{Case 2: $\big|(c,c') \cap V(C_k)\big|>1$}
	\label{fig:cycle-case2}
  \end{subfigure}
  \caption{Case distinction in the proof of \cref{clm:complete-cycle}.}
  \label{fig:cycle-cases}
\end{figure}
	
Now let $k \in \set{5, \dots, n-1}$.
The graph~$G_{n,k}$ contains at least one vertex~$v$ 
that is adjacent to all other vertices.
Let $\sigma$ be any cyclic order of $V(G_{n,k})$ starting at~$v$ in clockwise direction.
We prove that $G_{n,k}$ does not admit a convex OOR with respect to~$\sigma$.
To this end, let $c$ be the first vertex of $C_k$ after~$v$ in~$\sigma$.
Let $c'$ be the last vertex in~$\sigma$ that is not adjacent to~$c$.
We consider two cases.
\begin{description}
\item[Case 1:] There is only one vertex~$m$ of~$C_k$ in the interval $(c,c')$;
see~\cref{fig:cycle-case1}.

  \smallskip

Note that $c m$ is a non-edge.  
Let $m'$ be the other vertex that shares a non-edge with~$m$.
Note that $m'$ lies between $c'$ and $v$ since $c$ is the first vertex of~$C_k$ after~$v$.
Hence, $mm'$ is a diagonal of the quadrilateral
$Q = \langle v, m, c', m' \rangle$.  
We argue that the edges of~$Q$ belong to~$G_{n,k}$.
By the choice of $v$, this is the case for $vm$ and $vm'$.
The same holds true also for $m c'$ and $c' m'$, 
otherwise the non-edges of~$G_{n,k}$ would contain a~$C_3$ or a~$C_4$, respectively.
However, $G_{n,k}$ has a simple $k$-cycle of non-edges with $k\ge5$.

  \medskip

\item[Case 2:] There are at least two
vertices of~$C_k$ in the interval $(c,c')$; see \cref{fig:cycle-case2}.

  \smallskip

Recall that $cc'$ is a non-edge of~$G_{n,k}$.
For $G_{n,k}$ to admit a convex OOR with respect to~$\sigma$,
by \cref{clm:gap-condition}, $cc'$ would have to have a
candidate gap~$g$, i.e., an edge $mm'$ of~$C_k$. Due to the presence of the
edges~$vc$ and~$vc'$, the gap~$g$ must lie on the side of
$cc'$ opposite of~$v$. By the definition of a candidate gap, no
edge connects the intervals $[c,m]$ and $[m',c']$.
This implies that $cm'$ and $mc'$ are non-edges.  
Hence, $\croc{ c, c', m, m' }$ is a 4-cycle of non-edges~-- a contradiction to the
fact that $G_{n,k}$ has a simple $k$-cycle of non-edges only with $k \ge 5$.
\end{description}
In both cases, we have that $G_{n,k}$ does not admit a
convex OOR with respect to~$\sigma$.
\end{proof}

\section{Regular Outside Obstacle Representations}
\label{sec:ngon}

This section deals with regular OORs of three graph classes.
A \emph{cactus} is a connected graph 
where every edge is contained in at most one simple cycle.
The \emph{weak dual} of a plane graph or a planar drawing is
its dual graph without the vertex corresponding to the outer face.
An \emph{outerpath} is a graph that admits an \emph{outerpath} drawing, i.e., 
an outerplanar drawing whose weak dual is a path.
Let $P_k = \langle v_1,\dots,v_k \rangle$ denote a (simple) path with $k$
vertices.  A graph $G$ is a \emph{grid graph} (or simply a grid)
if there are positive integers $k$ and $\ell$ such that
$G=P_k\square P_\ell$, that is, if $V(G)=V(P_k) \times V(P_\ell)$ and
if $G$ contains an edge between two vertices $(v_i,v_j)$ and
$(v_{i'},v_{j'})$ if and only if $|i-i'|=0$ and $|j-j'|=1$ or vice
versa.

\begin{theorem} \label{clm:ngon:cactus}
  Every cactus has a reducible regular OOR.
\end{theorem}
\begin{proof}
  For the given cactus~$G$, we first compute the block-cut tree (whose
  definition we recall below).
Then, following the structure of the block-cut tree, we treat the blocks one by one.
For each block, we insert its vertices as an interval into the vertex order of the subgraph that we have treated so far.
Finally, we prove that the resulting cyclic vertex order yields a reducible regular OOR.

A \emph{block-cut tree} of a connected graph is a tree that
has a vertex for each cut vertex and for each \emph{block}, which can either be a maximal biconnected subgraph or a bridge (cut edge).
There is an edge in the block-cut tree
for each pair $(B,v)$ of a block $B$ and a cut vertex $v$ with
$v \in V(B)$.  For an example of a block-cut tree of a cactus,
see~\cref{fig:cactus-block-cut-tree}.
We root the block-cut tree in an arbitrary block vertex and
number the block vertices according to a breadth-first search
traversal starting at the root.

\begin{figure}[htb]
  \centering
  \includegraphics[page=2]{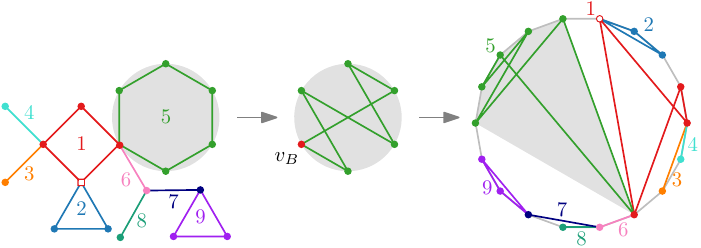}
  \caption{A cactus and its block-cut tree.
  White tree nodes correspond to cut vertices in the cactus.}
  \label{fig:cactus-block-cut-tree}
\end{figure}

In order to draw the cactus~$G$, we treat its blocks one by one and insert
the vertices of each block into the cyclic order, starting with the root
block. For each further block $B$, there is a cut vertex~$v_B \in V(B)
\cap V(B')$ where $B'$ is a block that we have treated earlier.
We insert the vertices of~$B$ as an interval between~$v_B$ and its
clockwise successor in the current cyclic vertex order.
For the root block~$B^\star$, let~$v_{B^\star}$ be an arbitrary vertex
of~$B^\star$ (marked by a white square in
\cref{fig:cactus-block-cut-tree,fig:cactus}).

Now we draw the current block $B$. If $B$ is a single edge $v_B w$,
we place $w$ immediately behind~$v_B$. If $B$ is a cycle with
$k$ vertices ($k \ge 3$), we start with the cut vertex $v_B$
and proceed in a zig-zag manner, mapping the vertices to positions
1 (which is $v_B$), $k, 2, k-1, \dots, \lceil(k+1)/2\rceil$; see~\cref{fig:cactus} (center).
For~$k \le 4$, all vertices of $B$ satisfy the consecutive-neighbors property.
For~$k \ge 5$, exactly two vertices do not satisfy this
property, but these two vertices are adjacent, 
so by \cref{clm:consecutiveNeighbors} all non-edges of $B$ intersect
the outer face as required.

\begin{figure}[bht]
  \centering
  \includegraphics[page=1]{cactus}
  \caption{Constructing a reducible regular OOR of a cactus.}
  \label{fig:cactus}
\end{figure}
	
Now we draw $G$ by placing the vertices in the cyclic order on a circle
that we	just defined; the exact positions on the circle do not matter.
Consider the convex hulls of the blocks in the drawing.
Observe that any two of them share at most one
vertex. Moreover, the boundary of each convex hull lies
completely in the outer face of the drawing.
In the process described above, each block has its own OOR
due to \cref{clm:consecutiveNeighbors}.
Hence, the whole drawing is an OOR of~$G$.
The representation is reducible since each vertex has degree at most~2
within each block, and each block is surrounded by the outer face.

Since only the order of the vertices along the circle is important for
the OOR, not their exact positions, it is easy to obtain a regular
OOR.  For the same reason, even cactus forests admit OORs.
\end{proof}

\begin{theorem}
  \label{thm:grids}
  Every grid has a reducible regular OOR.
\end{theorem}

\begin{proof}
Let $k$ and $\ell$ be positive integers such that the grid is
$G = P_k \square P_\ell$.  We name the vertices of $P_k$
such that $P_k=\langle v_1,\dots, v_k\rangle$.
We assume that $k \ge \ell$.
We place the vertices of each copy of~$P_k$ in a zig-zag manner
on consecutive corners of a regular $k\ell$-gon, that is, in the order
$v_k, v_{k-2}, v_{k-4}, \dots, v_5, v_3, v_1, v_2, v_4, \dots, v_{k-1}$ if $k$ is odd
and in the order  $v_k, v_{k-2}, v_{k-4}, \dots, v_4, v_2, v_1, v_3, v_5, \dots, v_{k-1}$ if $k$ is even;
see \cref{fig:grid}.
If $\ell=1$, then $G$ is a path and the zig-zag order that we just
described defines a regular OOR, hence we may assume that $\ell \ge 2$ from now on.
The copies of $P_k$ (black in \cref{fig:grid}) are placed directly
one after the other.  This fixes our drawing of~$G$.

\begin{figure}[bht]
  \centering
  \includegraphics{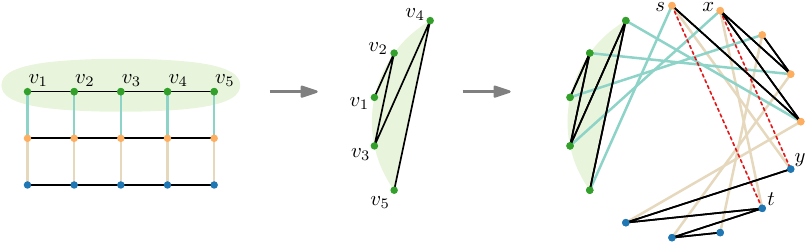}
  \caption{Constructing a reducible regular OOR of the grid $P_5 \square P_3$.}
  \label{fig:grid}
\end{figure}

Given a pair $\{s,t\}$ of vertices of~$G$, we say that the
\emph{cyclic length} of the edge or non-edge $st$ is~$d$ if the
line through~$s$ and~$t$ splits the plane into two open halfplanes
such that the one that contains fewer vertices of~$G$ contains exactly
$d-1$ vertices.
Within~$P_k$, the longest edge is $v_{k-1}v_k$; it has cyclic
length $k-1$.  Since the copies of $P_k$
are placed directly one after the other, every edge within a
copy of~$P_\ell$ (colored lightly in \cref{fig:grid}) has cyclic
length exactly~$k$.
	
We now show that every non-edge intersects the outer face. First,
consider a non-edge~$st$ that has cyclic length at least
$k+1$; see \cref{fig:grid}. Then it is longer than every edge
of~$G$; hence it intersects the gap region~$g$ between the first and the
last copy of~$P_k$. Note that non-edges of length exactly~$k$ exist
only between vertices of the first and the last copy of~$P_k$, but
these non-edges, too, intersect the gap region~$g$.

\begin{figure}[t]
  \centering
  \includegraphics{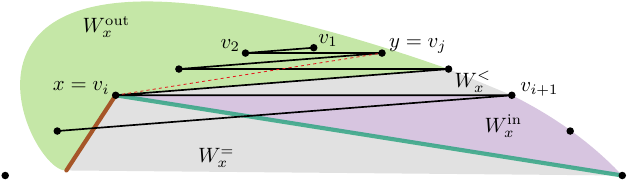}
  \caption{The four wedges with respect to a vertex~$x$ of the grid
  graph $P_k \square P_\ell$.  The black edges represent a copy
  of~$P_k$.  The two colored edges of length~$k$ belong to the
  same copy of~$P_\ell$.}
  \label{fig:wedges}
\end{figure}

\newcommand{\Win}{\ensuremath{W^\mathrm{in}}}
\newcommand{\Wout}{\ensuremath{W^\mathrm{out}}}

Next, consider a non-edge $xy$ that has length less than~$k$.
As every vertex of~$G$, $x$~is incident to one or two edges of
length~$k$ (contained in a copy of~$P_\ell$) and to one or two edges of length less than~$k$ (contained in a copy of~$P_k$).
Let $W^=_x$ be the wedge with apex~$x$ formed by (and including) the length-$k$ edge(s)~-- 
if there is just one such edge, then $W^=_x$ is the ray
starting in~$x$ that contains this edge; see \cref{fig:wedges}.
Similarly, let $W^<_x$ be the wedge formed by (and including) the shorter edge(s).
Hence, the angular space around~$x$ is subdivided into four wedges:
$W^=_x$, $W^<_x$, $\Win_x$, and $\Wout_x$, where $\Wout_x$ is the
(open) wedge between $W^=_x$ and $W^<_x$ that intersects the outer
face of the drawing in the vicinity of~$x$, and $\Win_x$ is the remainder
of the plane.  Let $W^=_y$, $W^<_y$, $\Win_y$, and $\Wout_y$ be defined analogously with respect to~$y$.
Note that the line segment $\overline{xy}$ is
neither contained in~$W^=_x$ nor in~$W^<_x$.
This is due to the fact that~$W^=_x$ contains only vertices of distance greater
than~$k$ and~$W^<_x$ contains no vertices in its interior. 
For the same reasons, $\overline{xy}$ is neither contained in~$W^=_y$ nor in~$W^<_y$.  
We now show that $\overline{xy}$ lies in~$\Wout_x$ or in~$\Wout_y$,
which implies that~$\overline{xy}$ intersects the outer face of the
drawing in the vicinity of~$x$ or~$y$, respectively.  We consider two cases.
\begin{description}
\item[Case~1:] Vertex $y$ belongs to the same copy of~$P_k$ as~$x$.

  \smallskip
  
  Let $i,j \in \{1,\dots,k\}$ be such that $x = v_i$ and $y = v_j$.
  First suppose that $j < i$; see \cref{fig:wedges}.  Then, by our
  layout, the cyclic length of~$xy$ is less than~$i$.  The
  non-edges in~$\Win_x$, however, have cyclic length at least~$i$. 
  (The edge $v_iv_{i+1}$ is the shortest edge in~$\Win_x$
  and has cyclic length~$i$; see \cref{fig:wedges}.)
  Hence,~$\overline{xy}$ cannot lie in~$\Win_x$ and must lie in~$\Wout_x$.  
  If $j > i$, we can argue analogously (by
  swapping~$x$ with~$y$ and $i$ with $j$) to show that~$\overline{xy}$
  lies in~$\Wout_y$.

  \medskip
  
\item[Case~2:] Vertex $y$ lies in a different (but neighboring) copy
  of~$P_k$, say, the next copy; see \cref{fig:grid} (right).

  \smallskip

  If~$\overline{xy}$ lies in~$\Wout_x$, we are done.  So suppose that
  $\overline{xy}$ lies in~$\Win_x$.  This implies that $x$ must lie in
  the first half (that is, $x \in \set{v_k, v_{k-2}, \dots, v_1}$)
  of its copy.  Since the cyclic length of~$xy$ is less
  than~$k$, $y$~must lie in the first half of its copy, too.  Due to
  our layout of~$P_k$, the (short) edges in~$W^<_y$ go to the other
  half of the copy.  Therefore, the length-$k$ edges that
  define~$W^=_y$ must lie between the short non-edge $\overline{xy}$ and
  the short edges that define~$W^<_y$. In other words, $\overline{xy}$
  lies in~$\Wout_y$.
\end{description}
  
For reducibility, we can argue similarly as for the non-edges.
Indeed, every edge of cyclic length~$k$ is incident to the gap region~$g$
between the first and last copy of~$P_k$; see \cref{fig:grid} (right).
The shorter edges
alternate in direction, so for $i \in \set{1, \dots, k-1}$, the edge
$v_iv_{i+1}$ of~$P_k$ is adjacent to the outer face in the vicinity
of vertex~$v_{i+1}$.
\end{proof}

\begin{theorem} \label{clm:ngon:outerpath}
  Every outerpath has a reducible regular OOR.
\end{theorem}
\begin{proof}
Let~$G$ be an $n$-vertex outerpath, and let~$\Gamma$ be an outerpath
drawing of~$G$. We show that~$G$ admits a reducible regular OOR.
The statement is trivial for $n \le 3$, so assume otherwise.
By reducibility and appropriately triangulating the inner faces
of~$\Gamma$, we may assume without loss of generality that each
inner face of~$\Gamma$ is a triangle. 
Let the path $\croc{ t_1, t_2, \dots, t_{n-2} }$ be the weak dual of~$\Gamma$.
Let~$V_i$ denote the set of vertices of~$G$ that are incident to the triangles $t_1, t_2, \dots, t_i$.
By definition, $V_1$ contains a vertex~$v_1$ of degree~2.
For $4 \le i \le n$, let $v_i$ denote the unique vertex
in $V_{i-2} \setminus \bigcup_{j=1}^{i-3}V_j$, where
$\bigcup_{j=1}^{i-3}V_j = \lbrace v_1,v_2,\dots,v_{i-1} \rbrace$;
see \cref{fig:outerpath} (left).  
For $4 \le i < n$, the vertex $v_i$ is incident to an inner
edge~$e_i = v_iv_j$ of~$\Gamma$ such that~$j < i$
and~$v_j$ belongs to the triangle~$t_{i-3}$.
Let $G_i = G[v_1,v_2,\dots,v_i]$.
We iteratively construct reducible regular OORs $\Gamma_3, \Gamma_4, \dots, \Gamma_{n}$
of $G_3, G_4, \dots, G_{n}(=G)$, respectively.
We create~$\Gamma_3$ by arbitrarily drawing~$G_3$ on the circle.
For $4 \le i < n$, to obtain~$\Gamma_i$ for from~$\Gamma_{i-1}$,
we consider the inner edge~$e_i = v_iv_j$ 
and place $v_i$ next to~$v_j$ on the circle,
avoiding the (empty) arc of the circle that corresponds to~$e_{i-1}$; see \cref{fig:outerpath} (right).
So if~$v_j$ is incident to multiple inner edges,
then the corresponding neighbors of~$v_j$ alternate on the circle.
The vertex~$v_n$ is placed next to~$v_{n-1}$, avoiding the arc that corresponds to~$e_{n-1}$.
(This yields that~$v_n$ has the consecutive neighbors property.)
Note that we fixed only the cyclic order of vertices and not their specific position.
Thus, we can place the vertices on a regular polygon.
  
\begin{figure}[htb]
  \centering
  \includegraphics{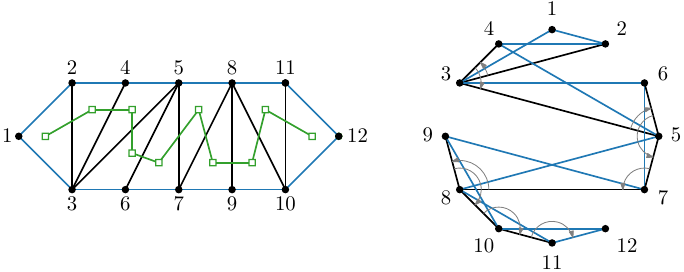}
  \caption{A drawing~$\Gamma$ of an outerpath~$G$ and a reducible
  regular OOR of~$G$ based on~$\Gamma$. Inner edges are black,
  outer edges are blue, weak dual edges are green.}
  \label{fig:outerpath}
\end{figure}
  
Now we show that each $\Gamma_i$, for $i \in \set{3, \dots, n}$,
is indeed a reducible regular OOR.
For any three points $a$, $b$, and $c$ on the circle~$C$,
let~$h_{ab}^{+c}$ be the open half-plane that is defined by the
line~$\ell_{ab}$ through $a$ and $b$ and that contains~$c$.
Similarly, let~$h_{ab}^{-c}$ be the open half-plane defined
by~$\ell_{ab}$ that does not contain~$c$.
Hence, $h_{ab}^{+c} \cup \ell_{ab} \cup h_{ab}^{-c} = \mathbb{R}^2$.

Let $u$, $v$, and $v'$ be the vertices of the triangle $t_{i-1}$
with $i \in \set{2, \dots, n-2}$ such that $vv'$ is an inner edge.
Furthermore, let $w$ be a new vertex of $t_i$ and, without loss
of generality, let $vw$ be an inner edge if $i \neq n-2$, see \cref{fig:outerpath-proof}.
By construction of the cyclic order described above,
we keep the invariant that when we place~$w$ on~$C$, $h_{vw}^{-u}$
is empty and $h_{v'w}^{-u}$ contains only~$v$ (among the vertices placed so far).

\begin{figure}[b]
  \centering
  \includegraphics{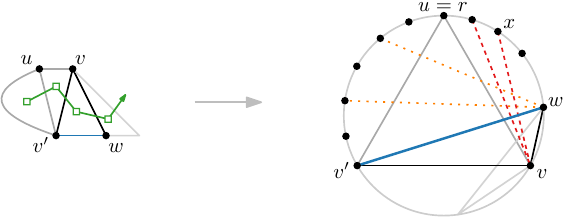}
  \caption{Constructing a representation for outerpaths such that the
  invariants are maintained.} 
  \label{fig:outerpath-proof}
\end{figure}

Now we show that when we add the vertex $w$, and, 
thus, the triangle $\triangle wv'v$,
a non-edge that goes through the outer face of the
drawing~$\Gamma_{i-1}$ of~$G_{i-1}$
continues to do so in the drawing~$\Gamma_i$  of~$G_i$.
We assume that $\triangle uv'v$ is oriented
counterclockwise (as in \cref{fig:outerpath-proof}).
Let~$r$ be the last neighbor of~$v$ in~$G_{i-1}$ in the
cyclic order that starts from $v$ and follows the circle counterclockwise.
Since $u$ is a neighbor of~$v$, $r = u$ (as in \cref{fig:outerpath-proof}) 
or $r$ lies strictly between $u$ and $v$, but $r \ne w$ because $w \not\in V(G_{i-1})$.
Note that the half-plane~$h_{v'w}^{-u}$ contains (the interior of) $\triangle wv'v$,
but $v$ is the only vertex in~$h_{v'w}^{-u}$.
Therefore, only non-edges {\em incident to~$v$} 
can be affected by the addition of $\triangle wv'v$, and
among these only the ones that go through the outer face
of~$\Gamma_{i-1}$ in the vicinity of~$v$. These are the non-edges
(dashed red in \cref{fig:outerpath-proof}) that are incident to~$v$
and lie in the half-plane~$h_{rv}^{-v'}$ that is induced by $rv$ and
does not contain~$v'$. Any such non-edge $vx$ intersects~$v'w$
since $v$ and $x$ lie on different sides of~$v'w$.
The intersection point of $vx$ and~$v'w$ lies on the outer face of~$\Gamma_i$
because~$h_{v'w}^{-u}$ contains only~$v$ and, in~$G_i$, 
$w$ is incident to only~$v$ and~$v'$.
This proves our claim regarding the ``old'' non-edges.

The ``new'' non-edges (dotted orange in \cref{fig:outerpath-proof})
are all incident to~$w$ and lie in~$h_{v'w}^{-v}$.
Since the two neighbors of~$w$, namely~$v$ and~$v'$,
are consecutive in~$\Gamma_i$, all non-edges incident 
to~$w$ go through the outer face~-- at least in the vicinity of $w$.

It remains to show that $\Gamma_i$ is reducible. For the two new
edges incident to~$w$ it is clear that they are both part of the
outer face~-- at least in the vicinity of~$w$.
Since~$h_{vw}^{-u}$ is empty, the only old edge that is affected
by the addition of~$\triangle wv'v$ is the edge~$vr$. It used to
be part of the outer face at least in the vicinity of~$v$.
Arguing similarly as we did above for the non-edge~$vx$,
we get that the intersection point of~$vr$ and~$v'w$ lies on the outer face.
\end{proof}

We consider two further simple graph classes that trivially admit regular OORs.
A graph is \emph{convex round} if its vertices can be cyclicly
enumerated such that the open neighborhood of every vertex is an
interval in the enumeration.
A \emph{bipartite graph} with bipartition $(U,W)$ of the vertex set is
\emph{convex} if $U$ can be enumerated such that, for each vertex
in~$W$, its neighborhood is an interval in the enumeration of~$U$.
Note that every complete bipartite graph is convex.
By definition, every convex round and convex bipartite graph admits a
cyclic order such that every vertex satisfies the
consecutive-neighbors property (\cref{clm:consecutiveNeighbors}).
This yields the following.

\begin{observation}
  Every convex round and convex bipartite graph admits a regular OOR.
\end{observation}

Our representations for convex round graphs, convex bipartite graphs,
cacti, outerpaths, and complements of caterpillars rely on the
consecutive-neighbors property (\cref{clm:consecutiveNeighbors}).
Hence, {\em any} convex point set of size $n$ is \emph{universal} in
the sense that it can be used as vertex set for an OOR for any graph
with $n$ vertices from one of these families.

\section{Small Graphs}
\label{app:small-graphs}

In this section we show that every graph with up to six vertices~--
except for the graph~$W_6$ depicted in \cref{fig:not-convex}~-- admits
a regular OOR (see \cref{clm:ngon:6vertices} below).
Lang~\cite{Lang22} showed that every outerplanar graph with up to
seven vertices admits a regular OOR.
The 8-vertex outerplanar graph in \cref{fig:outerplanar7}, however,
does not admit any regular OOR (see \cref{clm:ngon:outerplanar}
below).  It is the only 8-vertex outerplanar graph with this
property~\cite{Lang22}.

\begin{proposition} \label{clm:ngon:6vertices}
  There exists a regular OOR for every  graph with up to six vertices,
  except for the wheel graph $W_6$.
\end{proposition}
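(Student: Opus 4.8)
The plan is to isolate $W_6$ as the single obstruction and then realize every other graph on the regular $n$-gon, using \cref{clm:consecutiveNeighbors} as the main engine. For the exception, I would observe that a regular OOR is in particular a convex OOR, so it suffices to show that $W_6$ has no convex OOR. By \cref{clm:gap-condition} this reduces to a direct finite check that no circular order of the six vertices of $W_6$ provides a candidate gap for each of the five non-edges among its rim (the $C_5$-chords). Hence it remains to exhibit a regular OOR for every remaining graph on at most six vertices.

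The first tool is \cref{clm:consecutiveNeighbors}. Its guarantee is purely combinatorial: when $V'$ covers the non-edges and each $v \in V'$ has consecutive neighbors, every non-edge escapes into the outer face in the immediate angular vicinity of an endpoint in $V'$, and this escape depends only on the cyclic order $\sigma$, not on the exact coordinates. Consequently the representation can be placed on any strictly convex cocircular point set in the order $\sigma$, in particular on a regular $n$-gon, so any such pair $(\sigma, V')$ already yields a regular OOR. The second, auxiliary, tool is \cref{clm:reducible}: whenever one of our constructions happens to be reducible, it settles an entire downward-closed family of subgraphs at once.

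With these tools the bulk of the casework is routine. The graphs on at most five vertices are few, and each admits an order in which a one- or two-vertex set $V'$ covers the non-edges. For $n=6$, sparse graphs are handled directly, since a small vertex cover of the non-edge set provides $V'$; the dense graphs are treated in the complementary picture, where a vertex has the consecutive-neighbors property exactly when its non-neighbors form an interval of $\sigma$, so that covering the non-edges amounts to making the edges incident to a covering set into intervals. This is precisely the circular-arrangement analysis already carried out for $\bar T$ in \cref{clm:caterpillar-tree} and for $G_{n,k}$ in \cref{clm:complete-cycle}, here specialized to six vertices.

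The main obstacle is pinning down the exact boundary: confirming that $W_6$ is the \emph{only} six-vertex graph without a regular OOR, that is, that every dense graph lying near $W_6$ in the edge-subset lattice still admits one even when no order carries a consecutive-neighbors cover. For these finitely many stubborn graphs I would give explicit placements on the regular hexagon and verify that each non-edge chord meets the outer face, i.e.\ crosses the gap region of some candidate gap. Since there are only $156$ graphs on six vertices and, up to rotation and reflection, only $5!/2 = 60$ cyclic orders of the regular hexagon, this verification is finite and decisive; carrying it out (with computer assistance) returns a valid regular OOR for every graph except $W_6$, which together with the gap-condition obstruction above completes the proof.
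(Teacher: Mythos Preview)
Your plan is correct and, at the top level, mirrors the paper: rule out $W_6$ via the gap condition, then exhibit a regular OOR for every remaining graph on at most six vertices by a finite check. The difference is in how the positive side is organised. You take \cref{clm:consecutiveNeighbors} as the primary filter and relegate the residual ``stubborn'' graphs to a computer search over the $156$ isomorphism types and $60$ dihedral classes of hexagon labellings. The paper instead uses the gap condition as the filter: it first argues that every six-vertex graph other than $W_6$ admits a cyclic order satisfying the gap condition, and then shows that on the regular hexagon such an order already realises all non-edges geometrically \emph{except} in one explicit pattern (a length-$3$ non-edge forced through the centre), which it then repairs by hand via alternative permutations for five subcases. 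So the paper trades your black-box enumeration for an identified geometric obstruction and a short, human-checkable case split; your route is more uniform but less explanatory. Both are valid, and your observation that a consecutive-neighbors order automatically upgrades to a \emph{regular} OOR is exactly the right reason why \cref{clm:consecutiveNeighbors} does most of the work.
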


\begin{proof}
Note that $W_6$ is isomorphic to $G_{6,5} = K_6 - E(C_5)$.
Hence, by \cref{clm:complete-cycle}, $W_6$ does not admit a convex OOR.
Except for $G_{6,5}$, we claim that every graph with at most six
vertices satisfies the gap condition and admits a regular OOR.
For graphs with up to four vertices, this is not difficult to check.
For graphs with five vertices, see \cref{fig:reg5gon}.

\begin{figure}[tbh]
  \centering
  \includegraphics[width=\textwidth]{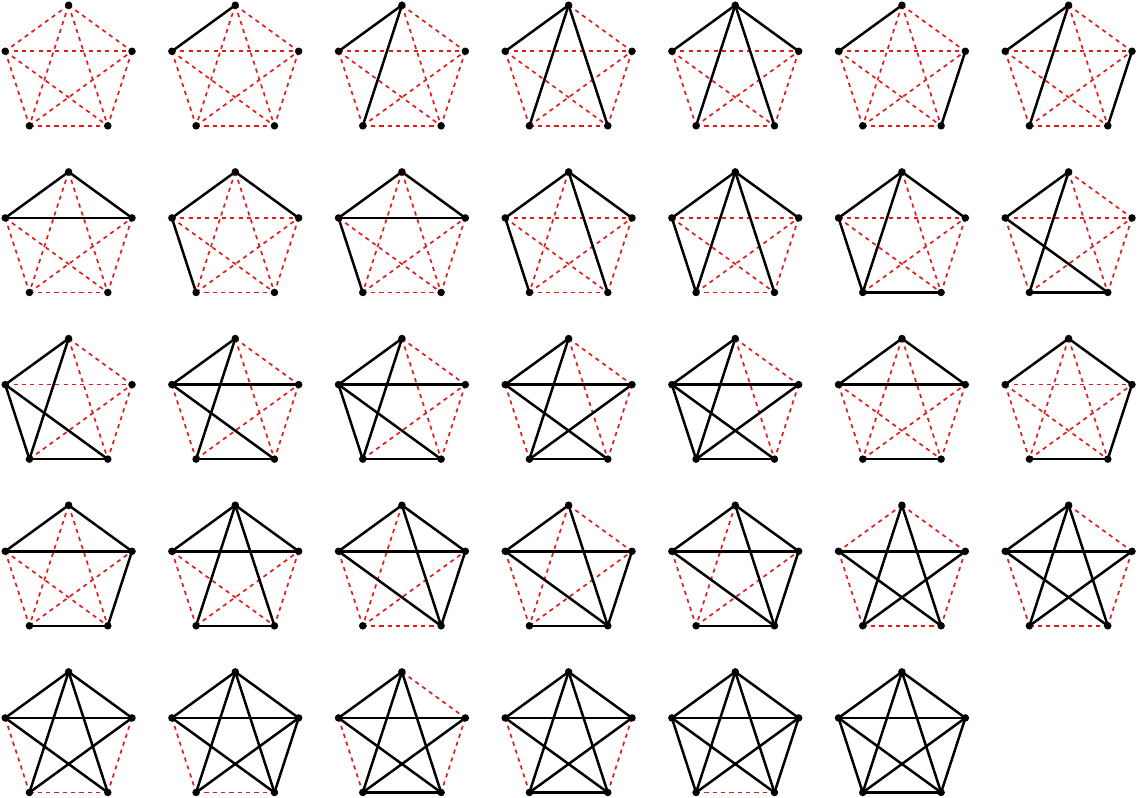}
  \caption{Every graph with up to five vertices admits a regular OOR.}
  \label{fig:reg5gon}
\end{figure}
	
For graphs with six vertices, we used the SAT formulation described in \cref{sec:conditions}
to find vertex orders that satisfy the gap condition for the vertex
set of the regular hexagon.
We now argue that each such vertex order either directly induces a
regular OOR or we can find an alternative OOR. 
To this end, we have to check whether every non-edge intersects a gap region.
We do a case distinction depending on the \emph{length} of the
non-edges in terms of the number of hexagon edges that they shortcut.
There are three cases for a non-edge~$e$ (orange dashed in
\cref{fig:reg6gon_1}).
\begin{itemize}
\item If $e$ has length~1, $e$ defines its own gap region, which it
  intersects.
\item If $e$ has length~2 (see \cref{fig:reg6gon_1-1,fig:reg6gon_1-2,fig:reg6gon_1-3}), then
  there are three possible gaps (up to symmetry, green dotted). For
  the gap condition to be satisfied for~$e$, in some cases, there must
  be further non-edges (purple dash-dotted).  In all three cases, $e$
  does intersect a gap region.
\item If $e$ has length~3 (see \cref{fig:reg6gon_1-4,fig:reg6gon_1-5}), there
  are two possible gaps (up to symmetry, green dotted). For the
  situation in \cref{fig:reg6gon_1-4}, the purple dash-dotted
  non-edge again makes sure that $e$ intersects the gap region of the
  green dotted non-edge. If, however, the endpoints of $e$ are not
  incident to the gap (see \cref{fig:reg6gon_1-5}), then $e$ may not
  intersect the gap region of the green dotted non-edge. This is the
  case if four specific vertex pairs are edges (blue solid). Then $e$
  touches the gap region, but does not intersect it. We analyze this
  case below.
\end{itemize}
	
\begin{figure}[tb]
  \centering
  \subcaptionbox{\label{fig:reg6gon_1-1}}{\includegraphics[page=1]{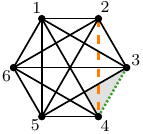}}
	\hfil
  \subcaptionbox{\label{fig:reg6gon_1-2}}{\includegraphics[page=2]{reg6gon-sep}}
	\hfil
  \subcaptionbox{\label{fig:reg6gon_1-3}}{\includegraphics[page=3]{reg6gon-sep}}
	\hfil
  \subcaptionbox{\label{fig:reg6gon_1-4}}{\includegraphics[page=4]{reg6gon-sep}}
	\hfil
  \subcaptionbox{\label{fig:reg6gon_1-5}}{\includegraphics[page=5]{reg6gon-sep}}
  \caption{Given an orange dashed non-edge~$e$, there are several
    (green dotted) candidate non-edges whose gaps could be intersected
    by~$e$.  Depending on the green dotted candidate non-edge, the
    existence of additional (purple dash-dotted) non-edges may be
    necessary to satisfy the gap condition.  Due to the blue solid
    edges, the drawing in~(\subref{fig:reg6gon_1-5}) is {\em not} a regular OOR.}
  \label{fig:reg6gon_1}
\end{figure}

\begin{figure}[tbh]
  \centering
  \includegraphics[page=2]{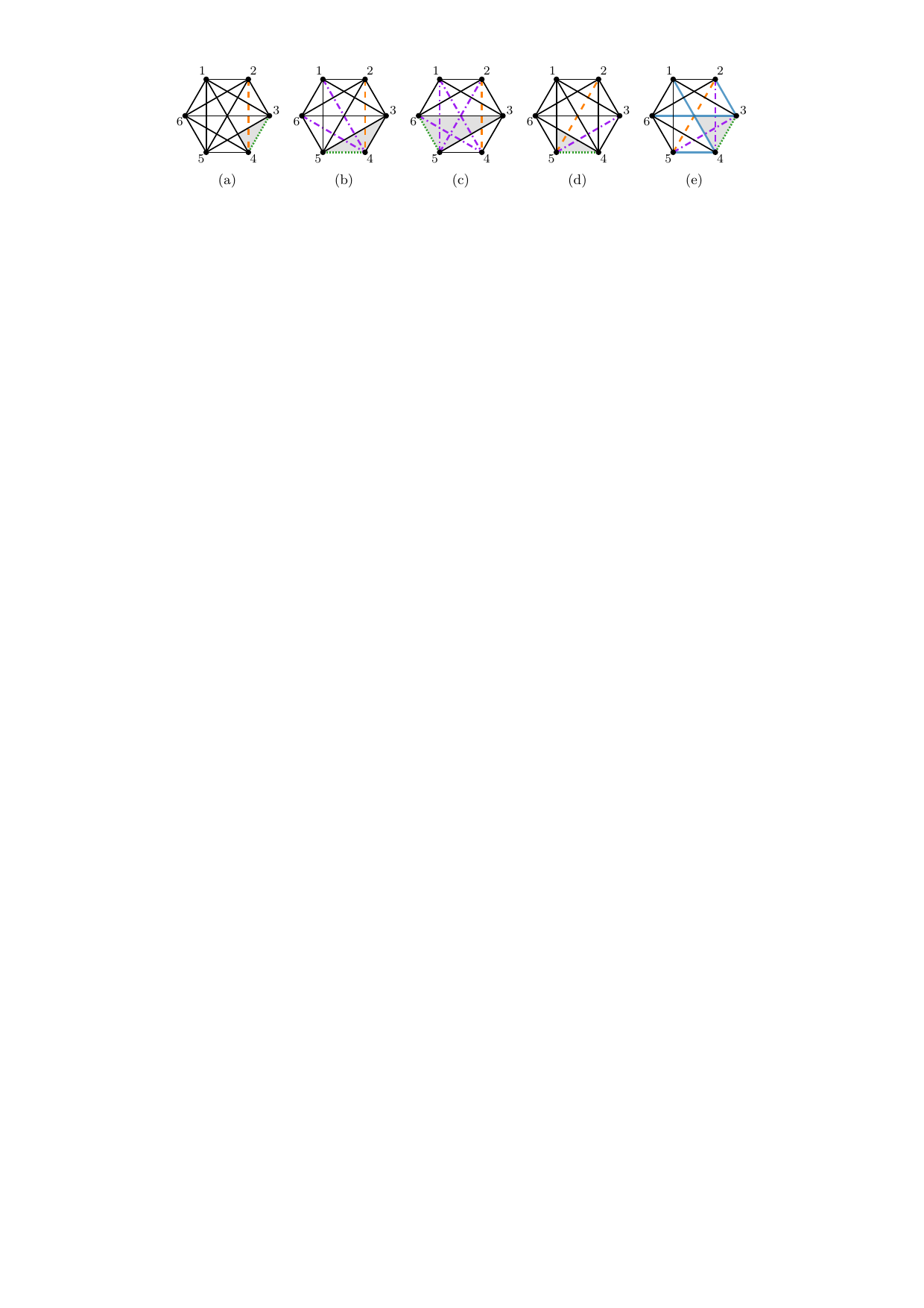}
  \caption{The first row depicts sub-cases of the last case from
    \cref{fig:reg6gon_1}; the second row depicts, for each of the upper
    sub-cases, an alternative vertex order that induces a regular OOR.}
  \label{fig:reg6gon_2}
\end{figure}
	
To solve the remaining case (with blue solid edges $\set{1,4}$,
$\set{2,3}$, $\set{3,6}$, $\set{4,5}$; and orange dashed, green
dotted, and purple dash-dotted non-edges $\set{2,5}$, $\set{3,4}$,
$\set{2,4}$, $\set{3,5}$, see \cref{fig:reg6gon_1-5}), we argue that
for every graph with this pattern, there exists a vertex order that
does induce a regular OOR.  In \cref{fig:reg6gon_2}, the first
row depicts five cases that cover all possible graphs with
the problematic pattern given the original vertex order;
the second row depicts a good vertex order for the same graph
(where the endpoints of the edge $\set{2,5}$ are consecutive). 
Our case distinction considers every subset~$S$ of the vertex pairs
$\set{\set{1,2},\set{1,6},\set{5,6}}$ as non-edges.  In each case, the
edges (black solid) and the non-edges (non-solid) are enforced by the
given case; gray line segments connect vertex pairs that can be either
edges or non-edges.
\begin{description}
\item[Case~1:] $S = \emptyset$ (first column of
  \cref{fig:reg6gon_2}).
  
\item[Case~2:] $S=\set{\set{1, 2},\set{5, 6}}$ and
  $S=\set{\set{1,2},\set{1,6},\set{5,6}}$ (second column).
  
\item[Case~3:] $S=\set{\set{1,2},\set{1,6}}$ (third column)
  and $S=\set{\set{1,6},\set{5,6}}$ (symmetric).

\item[Case 4:] $S=\set{\set{1,2}}$ (forth column)
  and $S=\set{\set{5,6}}$ (symmetric).

\item[Case 5:] $S=\set{\set{1,6}}$ (fifth column).
\end{description}
This shows that every graph with the problematic pattern admits a
regular OOR.
\end{proof}

\begin{proposition} \label{clm:ngon:outerplanar}
  There exists an 8-vertex outerplanar graph that has no regular OOR.
\end{proposition}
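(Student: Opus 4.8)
The plan is to analyze the concrete $8$-vertex graph $H$ drawn in \cref{fig:outerplanar7}: I would first read off its edge set, confirm that it is outerplanar and that the drawing shown is a circular OOR, and then prove that it admits no \emph{regular} OOR. The positive parts are routine. The graph $H$ is a triangulated sub-polygon, hence outerplanar, and in the drawing of \cref{fig:outerplanar7} every non-edge already reaches the outer face through a gap, so only a direct inspection is needed to certify the circular OOR. The substance of the proposition lies entirely in the impossibility of placing the eight vertices at the corners of a regular $8$-gon.

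For the impossibility I would proceed in two reductions. First I fix the admissible circular vertex orders: the dihedral symmetry of the regular $8$-gon together with the automorphisms of $H$ collapses the naive $8!/16$ circular orders into equivalence classes, and \cref{clm:gap-condition} discards every class in which some non-edge has no candidate gap, leaving only a short list to be handled by hand. (The $2$-connectivity of $H$ and the resulting rigidity of its outerplanar structure help keep this list short.) Second, for each surviving order I would single out the ``critical'' non-edge $\bar e$ — the diagonal that the perturbation ``move $7$ towards $x$'' in \cref{fig:outerplanar7} is designed to free — and write down its candidate gaps together with the edges of $H$ bounding each candidate gap region on the regular $8$-gon.

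The crux, and the step I expect to be the main obstacle, is a metric fact specific to the regular octagon: its diagonals meet in non-generic triple points, and I would show that exactly such a concurrency forces an edge of $H$ to lie across \emph{every} candidate gap region of $\bar e$. Concretely, using the coordinates (equivalently the angles $k\pi/4$) of the regular $8$-gon, I would compute where $\bar e$ crosses each candidate gap and verify that an edge separates $\bar e$ from the corresponding boundary gap, so that the chord of $\bar e$ never enters the outer face. Then $\bar e$ would be realized as a mutually visible pair, contradicting that it is a non-edge. This is also the place where the contrast with the circular OOR becomes transparent: displacing vertex $7$ destroys the octagon concurrency and reopens one candidate gap region, which is precisely why a circular but no regular OOR exists.

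The delicate bookkeeping will be to make sure that no candidate gap is overlooked for $\bar e$ in any of the surviving orders, since the gap condition guarantees that at least one candidate gap exists combinatorially and the whole argument hinges on showing that each of them is blocked geometrically on the regular polygon. If desired, the same case analysis certifies that $H$ is the \emph{only} $8$-vertex outerplanar graph without a regular OOR, matching the stronger statement attributed to \cite{Lang22}, but the proposition itself needs only the single witness $H$.
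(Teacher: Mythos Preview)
Your plan is a plausible strategy, but it is not the paper's, and one of your reductions is more optimistic than the evidence supports.

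The paper does not attack the $8$-vertex graph $H$ head-on. It first isolates a particular $6$-vertex outerplanar subgraph $G\subset H$ (\cref{fig:outerplanar6}) and determines, via a SAT-based exhaustive check, that $G$ has exactly two regular OORs on the hexagon up to the dihedral action (called Type~1 and Type~2). The two remaining vertices $u,w$ of $H$ each have degree~$2$ in~$H$, stacked on two specified inner edges of~$G$. The proof then simply tries every insertion of $u$ and $w$ into the cyclic order of each of the two hexagon drawings and observes that in every case some non-edge incident to $u$ or to $w$ fails to reach the outer face (\cref{fig:outerplanar7-perm}). So the case analysis is anchored at the $6$-vertex level and leans on computer assistance for that base case; the octagon geometry is never analyzed globally.

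Your direct route can in principle be completed, but two steps need more than you indicate. First, the assertion that symmetry plus the gap condition leaves a ``short list'' of cyclic orders is unsupported: $\mathrm{Aut}(H)$ is small for this $H$, the gap condition is only necessary, and the $2$-connectivity\,/\,outerplanarity of $H$ constrains its \emph{outerplanar} boundary order, not the cyclic orders admissible for a regular OOR (which need not coincide). Expect a sizeable enumeration here---essentially the work the paper delegates to SAT at the $6$-vertex stage. Second, the ``triple-point concurrency on the regular octagon'' is the right flavor of obstruction in some cases, but it is not a uniform mechanism: in several of the surviving orders the blocked non-edge is obstructed not by a concurrency but simply because its unique candidate gap lies on the wrong side of an incident edge. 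You will need a genuinely per-order argument rather than a single metric lemma, which is exactly what \cref{fig:outerplanar7-perm} records in the paper's approach.
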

\begin{proof}
Consider the 6-vertex outerplanar graph~$G$ in \cref{fig:outerplanar6}.
Up to rotation and mirroring, it has only two regular OORs, 
which we tested using a variant of the SAT formulation described in \cref{sec:conditions}.
We call the resulting OORs Type~1 if the brown edge 
passes through the center of the regular hexagon,
and Type~2 if the purple edge 
does, see \cref{fig:outerplanar6}.
Let~$H$ be a supergraph of~$G$ such that the two new vertices~$u$
and~$w$ are incident to 
the endpoints of the brown and the purple edge,
respectively; see \cref{fig:outerplanar7}.
None of the possibilities for adding~$u$
and $w$ into the cyclic order of the vertices of~$G$ in
\cref{fig:outerplanar6} yields a regular OOR
since in each case one of the non-edges incident to~$u$ or to~$w$
(red dashed in \cref{fig:outerplanar7-perm})
lies completely in the interior of the drawing.  The vertex orders in
\cref{fig:outerplanar7-perm} satisfy the gap condition, but do not
admit regular OORs.
\end{proof}
	
\begin{figure}[hbt]
  \centering
  \includegraphics[page=1]{outerplanar7}
  \caption{An outerplanar graph~$G$ and its regular OORs.}
  \label{fig:outerplanar6}
\end{figure}

\begin{figure}[hbt]
  \centering
  \includegraphics{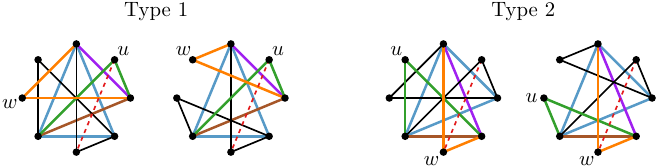}
  \caption{All possibilities for adding~$u$ and $w$ into the cyclic
    order of the vertices of~$G$ in \cref{fig:outerplanar6}.  In each
    drawing, the red dashed non-edge misses the outer face.}
  \label{fig:outerplanar7-perm}
\end{figure}

\section{Open Problems} 
\label{sec:open}

We leave the following problems open.
\begin{itemize}
  \item What is the complexity of deciding whether a given graph admits an OOR?
  \item Is the gap condition sufficient, i.e., does every graph with a
        cyclic vertex order satisfying the gap condition admit a convex OOR?
  \item Does every graph that admits a {\em convex} OOR also admit a {\em circular} OOR?
  \item Does every outerplanar graph admit a (reducible) convex OOR?
  \item Does every connected cubic graph {\em except the Peterson
      graph} (see \cref{fig:petersen}) admit a convex OOR?
  \item We have shown that every 2-tree and hence, every graph of
    treewidth~2, admits a reducible OOR.  Berman et
    al.~\cite{bcfghw-gong1-JGAA16} showed that there is a planar graph
    of treewidth~4 (see \cref{fig:x4}) that
    does not admit an OOR.  What about (planar) graphs of treewidth~3?
    The smallest 3-tree that does not admit a {\em convex} OOR is
    the (planar) graph with seven vertices shown in
    \cref{fig:3-tree}.  We verified this using the SAT formula
    described in \cref{sec:conditions}.
\end{itemize}

\begin{figure}[tb]
        \hspace{0ex}\hfill
	\begin{subfigure}[t]{.2\linewidth}
		\centering
		\includegraphics[page=1]{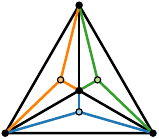}
		\caption{$B_7$}
		\label{fig:3-tree-graph}
	\end{subfigure}
	\hfill
	\begin{subfigure}[t]{.3\linewidth}
		\centering
		\includegraphics[page=2]{3-tree}
		\caption{a non-convex OOR of $B_7$}
		\label{fig:3-tree-OOR}
        \end{subfigure}
        \hfill\hspace{0ex}      
	\caption{The smallest 3-tree $B_7$ that does not admit a
          convex OOR.}
	\label{fig:3-tree}
\end{figure}

In particular, we conjecture the following:

\begin{conjecture}
  Every outerplanar graph admits a reducible circular OOR.
\end{conjecture}

The general idea is to show this only for one (infinite) family of
outerplanar graphs and to obtain the result for all outerplanar graphs
via reducibility.  In the family $(G_h)_{h \ge 1}$ that we
propose, the graph~$G_h$ is the \emph{complete
  outerplanar 2-tree} of height~$h$.  In the weak dual tree of this
graph, all dual vertices have degree either~$3$ or $1$, and one dual
vertex (the root) has distance $h$ to every leaf,
see~\cref{fig:outerplanar:graph}.

\begin{figure}[tb]
        \hspace{0ex}\hfill
	\begin{subfigure}[t]{.33\linewidth}
		\centering
		\includegraphics[page=1]{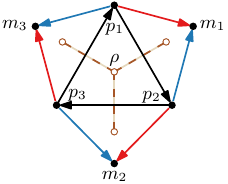}
		\caption{Graph $G_1$ with weak dual $T_1$}
		\label{fig:outerplanar:graphG1}
	\end{subfigure}
	\hfill
	\begin{subfigure}[t]{.38\linewidth}
		\centering
		\includegraphics[page=2]{outerplanar}
		\caption{Graph $G_3$ with weak dual $T_3$}
		\label{fig:outerplanar:graphG3}
	\end{subfigure}
        \hfill\hspace{0ex}
	\caption{Graph $G_h$ with weak dual $T_h$ (brown dashed edges)
          for $h \in \{1,3\}$.  In each vertex, the incoming edge from
          the primary (secondary) parent is drawn as a solid blue
          (red) arrow.
          Given the labels in~$G_1$, the new vertices in~$G_2$
          are labeled L$_i$ (R$_i$) if they are stacked to the left
          (right) of~$m_i$.
          For $h\ge 3$, the new vertices are labeled by
          appending~L or~R, respectively, to the label of the
          parent with the longer label.  (The subscript moves to the
          end of the label.)
          Hence, the label of a vertex~$v$ describes the path in~$T_h$
          from the root~$\rho$ to the face that was created by
          adding~$v$.}
	\label{fig:outerplanar:graph}
\end{figure}

For a parameter $0 < f \leq 1/2$, we then
construct a circular outside-obstacle representation of $G_h$ as follows.
Let $p_1$, $p_2$, and $p_3$ be the vertices of the triangle
corresponding to the root of the weak dual tree.
Place $p_1$, $p_2$, and $p_3$ with equal distances on the circle.
For~$j > 0$, place each level-$j$ vertex~$v$ clockwise
$(-f)^j \cdot 120^\circ$ away from its primary parent;
see \cref{fig:outerplanar:construction}.
So for odd $j$, $v$ is placed right before its primary parent, and for
even $j$, right after its primary parent (in clockwise order).

\begin{figure}[htb]
  \centering
  \includegraphics[page=3]{outerplanar}
  \caption{Constructing a circular outside-obstacle representation
    of~$G_3$ for $f = 1/2$.}
  \label{fig:outerplanar:construction}
\end{figure}

The intuition behind this construction is that every non-edge either
already goes through a gap region for $f = 1/2$, or moves towards one
when we decrease the value of~$f$ (which clusters the drawing).
One would then have to show that, for every~$h \ge 1$, there exists a
value $f_{\max}(h)$ such that the drawing defined above is an OOR (by
\cref{clm:gap-condition}).  Finding a recursive proof, however, turned
out to be challenging.  On the other hand, it is comparably easy to
show that the construction is reducible.

\pdfbookmark[1]{References}{References} 
\bibliographystyle{plainurl}
\bibliography{abbrv,obstacles}

\end{document}